\DeclareMathOperator{\MMSE}{MMSE}
\DeclareMathOperator{\CMMSE}{C-MMSE}
\DeclareMathOperator{\CMSE}{C-MSE}
\DeclareMathOperator{\MSE}{MSE}
\DeclareMathOperator{\MLE}{MLE}
\DeclareMathOperator{\EP}{EP}
\DeclareMathOperator{\MEP}{MEP}
\DeclareMathOperator{\CMEP}{C-MEP}
\DeclareMathOperator{\LL}{L}
\DeclareMathOperator{\Cl}{C-\textit{l}}
\DeclareMathOperator{\CL}{C-L}
\DeclareMathOperator{\zerone}{0-1}
\DeclareMathOperator{\Top}{Top}
\DeclareMathOperator*{\argmin}{argmin}
\DeclareMathOperator*{\argmax}{argmax}
\DeclarePairedDelimiter\floor{\lfloor}{\rfloor}
\def\bx{\mathbf{x}}
\def\bU{\mathbf{U}}
\def\bV{\mathbf{V}}
\def\bX{\mathbf{X}}
\def\bY{\mathbf{Y}}
\def\bZ{\mathbf{Z}}
\def\reals{\mathbb{R}}
\def\P{\mathbb{P}}
\def\Q{{\mathbb Q}}
\def\E{{\mathbb E}}
\def\normal{\mathcal{N}}
\newtheorem{theorem}{Theorem}
\newtheorem{definition}{Definition}
\newtheorem{lemma}{Lemma}
\def \LONG{}
\begin{document}
\title{On maximum-likelihood estimation in the all-or-nothing regime} 

\author{\IEEEauthorblockN{Luca Corinzia, and 
Paolo Penna}
\IEEEauthorblockA{
Department of Computer Science\\
ETH Zürich, Switzerland\\
Email: \{luca.corinzia,paolo.penna\}@inf.ethz.ch}
\and
\IEEEauthorblockN{Wojciech Szpankowski}
\IEEEauthorblockA{Department of Computer Science \\
Purdue University, USA\\
Email: szpan@purdue.edu}
\and 
\IEEEauthorblockN{Joachim M. Buhmann}
\IEEEauthorblockA{
Department of Computer Science\\
ETH Zürich, Switzerland\\
Email: jbuhmann@inf.ethz.ch}
}

\maketitle

\begin{abstract}
We study the problem of estimating a rank-1 additive deformation of a Gaussian tensor according to the \emph{maximum-likelihood estimator} (MLE). 
The analysis is carried out in the sparse setting, where the underlying signal has a support that scales sublinearly with the total number of dimensions.
We show that for Bernoulli distributed signals, the MLE undergoes an \emph{all-or-nothing} (AoN) phase transition, already established for the minimum mean-square-error estimator (MMSE) in the same problem.
The result follows from two main technical points: (i) the connection established between the MLE and the MMSE, using the first and second-moment methods in the constrained signal space, (ii) a recovery regime for the MMSE stricter than the simple error vanishing characterization given in the standard AoN, that is here proved as a general result.
\end{abstract}
\ifdefined\LONG \else
\textit{A full version of this paper is accessible at:}
\url{https://arxiv.org/pdf/21xx.xxxx.pdf} 
\fi
\section{Introduction}
A fundamental question in information theory, statistics, and machine learning is to establish the \emph{computational} limits of estimation problems and determine the \emph{statistical} limit as the inviolable benchmark for the same problem. The common picture that arises in many problems is given by the presence of \emph{phase transitions} where the behaviour of the optimal estimators changes abruptly with the variation of the parameter of the problem. Typically, at least two phase transitions are present: the statistical phase transition that establishes the limit of \emph{any} estimator, and the computational phase transition at higher signal strength that establishes the limit of \emph{tractable} estimators, despite the two can coincide.
In the sparse setting of many estimation problems, a rather different picture emerges, as the statistical and computational phase transitions are situated at different scales of the parameters space and such gap diverges in the limit of vanishing sparsity. Moreover, the statistical phase transition is characterized by the so-called \emph{all-or-nothing} phenomenon (AoN): below a critical signal strength, the recovery of the planted signal is impossible, above the threshold is possible and with vanishing error. Although the AoN is conjectured to extend to the behaviour of any optimal estimator, the analysis has been so far focused on the minimum mean-square-error (MMSE) estimation, that is typically a \emph{bulk} estimator and hence can be too coarse for specific applications in the sparse setting. Hence, it would be desirable to extend the analysis of this phenomenon to other estimators, like the maximum-likelihood estimator (MLE), which recently received attention for showing optimal performance in retrieving the planted signal \cite{jagannath2020statistical} in a tensor-PCA model.

\subsection{Contribution}
In this work, we provide new results on the AoN phenomenon in the sparse estimation setting, with the following main contributions:
\begin{itemize}
    \item In \Cref{th:rate}, we generalize the AoN phenomenon proved in \cite{niles2020all} for the additive Gaussian noise model and the MMSE to arbitrarily \emph{asymptotics} in the \emph{recovery} regime. The proof follows a conditional second-moment method argument \cite{szpankowski2011average} and extends the proof given in \cite{niles2020all} with a careful control of the asymptotics of the bounds. This result is of independent interest as more stringent conditions than the simple error vanishing characterization are needed in specific applications.
    \item As an application of the first result, we study the \emph{maximum-likelihood estimator} in the sparse tensor-PCA problem and show in \Cref{th:aon_MEP} that also this estimator undergoes a weak AoN transition.
    \item The proof of the latter results is of independent interest, as it exploits the relations between different estimators with first and second-moment methods, crucially introducing the analysis of estimators \emph{constrained} in the signal hypothesis space. As a side result, the weak AoN phenomenon is proved for the constrained MMSE in \Cref{th:aon_CMMSE}. 
\end{itemize}

\subsection{Related work}
The problem of high dimensional statistical estimation that we study here has received much attention recently, with considerable progress obtained in the last years in understanding planted matrix and tensor models. Early works on statistical and computational limits of estimation focused on dense problems where the signal effective dimensionality scales linearly with the problem's dimensionality. Examples include: (i) compressed sensing \cite{donoho2009message} and matrix-PCA \cite{dia2016mutual, deshpande2014information} where the approximated message passing (AMP) algorithms are introduced and demonstrated to match the statistical phase transition; (ii) the tensor-PCA extension \cite{richard2014statistical} where the statistical and computational transitions are currently separated by a gap, considering a wide range of algorithms, i.e. spectral \cite{montanari2016limitation, perry2020}, AMP \cite{lesieur2017statistical}, Sum-of-Square \cite{hopkins2015tensor} and gradient descent \cite{arous2020algorithmic, biroli2020iron}. Many of these works focused on the mean-square-error and (high dimensional, i.e., matrix or tensorial) posterior average estimator. Nonetheless, recently more attention has been given to other estimators, e.g., in \cite{jagannath2020statistical} where the vectorial maximum-likelihood estimator has been shown to reach optimal correlation with the planted signal in the tensor-PCA model. See \cite{zdeborova2016statistical} for a thorough review in the field.

In the sparse regime in which the hidden signal's dimensionality is sublinear to the problem's dimensionality, the AoN phenomenon emerges. This phenomenon have been shown recently to hold in a wide range of problems, i.e., for sparse linear regression \cite{reeves2019all}, sparse matrix-PCA \cite{barbier2020all} and sparse tensor-PCA \cite{niles2020all} according to the mean-square-error loss. However, to the best of our knowledge, only a few other works studied how the same AoN phenomenon extends to other estimators. Examples include \cite{corinzia2019exact, corinzia2020statistical}, where non-matching upper and lower bounds are provided for the transition of the vectorial-MLE in the sparse planted hypergraph problem (equivalent to sparse tensor-PCA up to a reparameterization of the dimensionality of the problem), and \cite{gamarnik2017high} where the AoN is proved in the sparse linear regression model for the vectorial-MLE estimator.

\section{Setting}
We study the estimation problem with observations given by the Gaussian additive model
\begin{equation}
\label{eq:model}
\bY = \sqrt{\lambda} \bX + \bZ
\end{equation}
where the signal to be estimated $\bX \in \reals^n$ is corrupted by Gaussian noise $\bZ$, with the collection $\{Z_i\}_{i=1}^n \overset{iid}{\sim} \P_z = \mathcal{N}(0, 1)$. 
We further assume that the prior distribution of $\bX$, denoted by $\P_n$, is uniform and discrete with support $supp(\P_n) \subset \mathcal{S}_{n-1}$, where $\mathcal{S}_{n-1}$ is the unit sphere in $\reals^n$. 
We denote by $M_n = |supp(\P_n)|$ the cardinality of the support of $\bX$ and by $\Q_{y|x} (\bY|\bX) = \mathcal{N}(\bY | \sqrt{\lambda} \bX , \bm{1}_{n\times n})$ the conditional distribution of $\bY$ given $\bX$, where $\bm{1}_{n\times n}$ is the identity matrix. 
We hence define $$\Q_{\lambda, n}(\bY) = \E_{\P_n}[\Q_{y|x}(\bY|\bX)]$$ as the distribution over the observations $\bY$, highlighting the respective signal-to-noise-ratio (snr) $\lambda$ and the problem dimension $n$ for convenience.
\begin{definition}
\label{def:estimator}
Let us denote as $\bar{\mathcal{S}}_{n-1} = \{ \bX \in \reals^n \colon \|\bX\| \le 1\}$ the unit ball. For any set $A$ we use the short notation $\min_{\hat{\bX}(\bY) \in A}$ for $\min_{\hat{\bX}: \bY \to \hat{\bX}(\bY) \in A}$ and analogously for other operators. For a generic bounded loss function $\LL$ we define the respective optimal estimator as
\begin{equation*}
\bX_{\LL}(\bY) = \argmin\limits_{ \hat{\bX}(\bY) \in \bar{\mathcal{S}}_{n-1}} \E\left[ \LL(\bX, \hat{\bX}(\bY)) \right].
\end{equation*}
The minimum loss achieved by such estimator is the quantity
\begin{equation*}
\LL_n(\lambda) = \E\left[ \LL(\bX, \bX_{\LL}(\bY)) \right].
\end{equation*}
We further define the best estimator that solves the optimization problem constrained to set $A$ with $supp(\P_n) \subset A \subset \bar{\mathcal{S}}_{n-1}$ as
\begin{equation*}
\bX_{\CL}(\bY) = \argmin\limits_{\hat{\bX}(\bY) \in A} \E\left[ \LL(\bX, \hat{\bX}(\bY)) \right],
\end{equation*}
and the respective minimum loss achieved as 
\begin{equation*}
\CL_n(\lambda) = \E\left[ \LL(\bX, \bX_{\CL}(\bY)) \right].
\end{equation*}
\end{definition}
We can easily observe that since $A \subset \bar{\mathcal{S}}_{n-1}$ then $\CL_n(\lambda) \ge \LL_n(\lambda)$ for any loss function $\LL$. 

\begin{definition}
\label{def:aon}
Denote by $\LL$ a bounded loss function $\LL \colon \bar{\mathcal{S}}_{n-1}^{\otimes 2} \to \reals_+$ with $\LL(\bX, \bX) = 0$. 
Denote by $c > 0$ the optimal error obtained by the estimator \emph{independent} on the observations $\bY$ as $c = \lim_{n \to \infty} \LL_n(0)$.
The estimation problem with observations given by \Cref{eq:model} with prior $\P_{n}$ satisfies the \emph{all-or-nothing} phenomenon (AoN) with respect to the loss $\LL$, with recovery asymptotics at least $\tau_n \in o(1)$ and critical snr $\lambda_n$ if 
\begin{equation}
\label{eq:aon}
\lim_{n \to \infty} \LL_n(\beta \lambda_n) =
\begin{cases}
\begin{aligned}
&\ c \ \ &\beta < 1 \\
&\ 0 \ \ &\beta > 1
\end{aligned}
\end{cases}
\end{equation}
and moreover $$\LL_n(\beta \lambda_n) \in o(\tau_n)$$ for $\beta > 1$, where $\beta$ is a constant independent on $n$.
\end{definition}
Intuitively, in the AoN phenomenon the estimation is impossible for a normalized snr smaller then a critical value $\beta_c = 1$, as the loss converges to the error achieved by an uninformative estimator, equivalent to the loss given by $\beta = 0$, while for higher snr the estimation is almost perfect, with error smaller then a given $\tau_n \to 0$. Note the difference of this definition from the one given in \cite{niles2020all} in the recovery regime $\beta > 1$. In  latter case the simpler condition $\LL_n(\beta \lambda_n) \to 0$ is given such that the asymptotics of the loss in the recovery regime in no further characterize. In the following we denote by $\tau_n$ a vanishing sequence such that $\tau_n \in o(1)$.

\section{Generalized all-or-nothing phenomenon}
\label{sec:aon}
In this section, we consider the mean-square-error (MSE), $\E[\|\bX - \hat{\bX}\|^2]$ where the expectation is taken with respect to $\P_n$ and $\P_z$, and $\hat{\bX} \coloneqq \hat{\bX}(\bY)$ is an estimator of the signal given the observation $\bY$. 
The MSE is minimized by the posterior average $\bX_{l_2} = \E[\bX|\bY]$, with the average taken respect to the posterior $\P_n(\bX | \bY)$ (see  \cite{cover1999elements} and \ifdefined\LONG \Cref{le:mmse}\else the appendix in the long version of the paper\fi). 
The minimum mean-square-error (MMSE) is then $\MMSE_n(\lambda) \coloneqq \E[\|\bX - \E[\bX | \bY]\|^2]$. This error is the minimum square-error achievable by any estimator that has access to the observations $\bY$. In the following, we denote by $D(p || q)$ the KL divergence between the distribution $p$ and $q$, and by $\pm o(\tau_n)$ a sequence $f_n \in o(\tau_n)$ that is respectively non-negative and non-positive. We further use the scaling $\lambda_n = 2 \log M_n$. A sufficient condition for having the AoN phenomenon is given by the property of the overlap rate function defined here.
\begin{definition}
For any $t \in [-1, 1]$ define the overlap rate function between two independent instances of the signal $\bX$ and $\bX'$ as
\begin{equation*}
r_n(t) = - \frac{1}{\log M_n} \log \P_n^{\otimes 2}[\langle\bX,\bX'\rangle \ge t].
\end{equation*}
where $\langle \bX, \bX' \rangle$ is the scalar product of two vectors.
\end{definition}
Intuitively, the rate function describes the rate of the exponential decay of the overlap $\langle \bX, \bX' \rangle$. 
The following theorem shows that a simple lower bound on the overlap rate function is sufficient to establish the AoN with recovery asymptotics $\tau_n$ if the latter is not too small.

\begin{theorem}
\label{th:rate}
For any $\epsilon > 0$ constant, if $\lambda_n^{-1/2 + \epsilon} \in o(\tau_n)$ and the overlap rate function $r_n(t)$ satisfies $$r_n(t) \ge \frac{2 t}{1 + t} - o(\tau_n),$$ then the probability $\P_n$ of the problem defined in \Cref{eq:model} satisfies the AoN in \Cref{def:aon} with recovery asymptotics at least $\tau_n$ according the mean-square-error.
\end{theorem}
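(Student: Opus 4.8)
The plan is to reduce the all‑or‑nothing statement for the mean–square error to a \emph{quantitative} statement about the mutual information $I_n(\lambda):=I(\bX;\bY)$ and to control the latter by a conditional second–moment computation governed by $r_n$, following the scheme of \cite{niles2020all} but tracking every error term to order $\tau_n$. First, since $\|\bX\|=1$ and the hypothesis $r_n(t)\ge\frac{2t}{1+t}-o(\tau_n)$ forces $\P_n^{\otimes2}[\langle\bX,\bX'\rangle\ge t]\le M_n^{-2t/(1+t)+o(\tau_n)}$, one gets $\|\E_{\P_n}\bX\|^2=\E_{\P_n^{\otimes2}}[\langle\bX,\bX'\rangle]\to0$, hence $c=\lim_n\MMSE_n(0)=1$. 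Next, the Gaussian channel obeys the I–MMSE identity $\tfrac{d}{d\lambda}I_n(\lambda)=\tfrac12\MMSE_n(\lambda)$ and the algebraic identity $I_n(\lambda)=\tfrac{\lambda}{2}-D(\Q_{\lambda,n}\,\|\,\P_z^{\otimes n})$; writing $\phi_n(\beta):=I_n(\beta\lambda_n)/\log M_n$ and using $\lambda_n=2\log M_n$, the map $\phi_n$ is concave with $\phi_n'(\beta)=\MMSE_n(\beta\lambda_n)$, while the entropy bound $I_n\le H(\bX)=\log M_n$ together with $D\ge0$ give for free $\phi_n(\beta)\le\min(\beta,1)$. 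The whole proof is therefore to establish matching lower bounds on $\phi_n$, with an explicit rate on the $\beta>1$ side.

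For the impossibility regime $\beta<1$ one must show $\phi_n(\beta)\ge\beta-o(1)$, i.e. $D(\Q_{\beta\lambda_n,n}\,\|\,\P_z^{\otimes n})=o(\log M_n)$. Bounding $D\le\log\big(1+\chi^2(\Q_{\beta\lambda_n,n}\,\|\,\P_z^{\otimes n})\big)=\log\E_{\P_n^{\otimes2}}\big[M_n^{2\beta\langle\bX,\bX'\rangle}\big]$ and estimating the last expectation by an integration–by–parts/Laplace argument against $\P_n^{\otimes2}[\langle\bX,\bX'\rangle\ge t]=M_n^{-r_n(t)}$ yields an exponent at most $\sup_{t\in[0,1]}\big(2\beta t-\tfrac{2t}{1+t}\big)+o(\tau_n)$ (the range $t<0$ contributing trivially $\le1$ since $M_n^{2\beta t}<1$ there); because $2\beta t-\tfrac{2t}{1+t}=2t\big(\beta-\tfrac1{1+t}\big)\le0$ on $[0,1]$ whenever $\beta\le\tfrac12$, this is $o(\tau_n)$. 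For $\beta\in(\tfrac12,1)$ the plain second moment $\E[Z_n]=1+\chi^2$ is inflated by a rare event, and one replaces it by the conditional (truncated) second moment of the partition function restricted to a typical event that excludes atypically large overlaps, exactly as in \cite{niles2020all,szpankowski2011average}; this restores $D=o(\log M_n)$, hence $\phi_n(\beta)\to\beta$, for the full range $\beta<1$. Pointwise convergence $\phi_n\to\min(\cdot,1)$ together with concavity then forces $\MMSE_n(\beta\lambda_n)=\phi_n'(\beta)\to1=c$ at every $\beta<1$.

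For the recovery regime I would exploit monotonicity of $\MMSE_n$ in $\lambda$: for any $1<\beta'<\beta$,
\[
\MMSE_n(\beta\lambda_n)\;\le\;\frac{1}{\beta-\beta'}\int_{\beta'}^{\beta}\MMSE_n(u\lambda_n)\,du\;=\;\frac{\phi_n(\beta)-\phi_n(\beta')}{\beta-\beta'}\;\le\;\frac{1-\phi_n(\beta')}{\beta-\beta'},
\]
so it suffices to prove a quantitative lower bound $1-\phi_n(\beta')=H(\bX\mid\bY)/\log M_n\le\delta_n$ with $\delta_n=O(\lambda_n^{-1/2+\epsilon})$ at snr $\beta'\lambda_n$. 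I bound $H(\bX\mid\bY)$ by showing that the posterior $\P_n(\cdot\mid\bY)$ concentrates on the shrinking ball $\{\bx:\langle\bX,\bx\rangle\ge1-s_n\}$ with $s_n\asymp\lambda_n^{-1/2+\epsilon}$: writing the weights $w(\bx)=\exp\!\big(\lambda\langle\bX,\bx\rangle-\tfrac\lambda2+\sqrt\lambda\langle\bZ,\bx\rangle\big)$ and $w(\bX)=\exp\!\big(\tfrac\lambda2+\sqrt\lambda\langle\bZ,\bX\rangle\big)=M_n^{\beta'-o(1)}$ with high probability, and partitioning candidates by their overlap $t=\langle\bX,\bx\rangle$: for $t$ below a fixed threshold $t^\ast<1$ a union bound weighted by $M_n^{-r_n(t)}$ gives posterior mass $M_n^{-\Omega(1)}$ (the relevant exponent $1-\tfrac{2t}{1+t}+2\beta' t-\beta'$ being negative at $t=0$ and increasing for $\beta'>1$); for $t\in(t^\ast,1-s_n)$ one controls the level counts $N_j$ (in expectation $\le M_n^{1-r_n(t_j)}$, in probability by Markov) and, crucially, the sums $\sum_{\bx\text{ at level }j}\exp(\sqrt\lambda\langle\bZ,\bx\rangle)$ by a conditional second–moment/Markov step conditioning $\bZ$ to be typical level by level, giving mass $\exp(-\Omega(s_n\log M_n))$; and candidates with $t\ge1-s_n$ satisfy $\|\bx-\bX\|\le\sqrt{2 s_n}$, so on the good event $\|\bX-\E[\bX\mid\bY]\|\le\sqrt{2 s_n}+o(1)$. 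Since $\langle\bZ,\bX\rangle$ is Gaussian and stays in the window $O\big(\sqrt{\lambda_n\log(1/\tau_n)}\big)=o(\log M_n)$ outside an event of probability $o(\tau_n)$, collecting the estimates gives $\MMSE_n(\beta\lambda_n)=O(s_n)+o(\tau_n)=O(\lambda_n^{-1/2+\epsilon})\in o(\tau_n)$ by hypothesis, which together with the previous paragraph establishes the AoN with recovery asymptotics $\tau_n$.

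The main obstacle is the quantitative bookkeeping in this last step: every ``with high probability'' in \cite{niles2020all} must be upgraded to an explicit rate, and the free parameters---the truncation level in the conditional second moment, the mesh of the overlap partition, the ball radius $s_n$, and the Gaussian–fluctuation window---have to be tuned simultaneously as functions of $\lambda_n$ and $\tau_n$; this is precisely where the assumption $\lambda_n^{-1/2+\epsilon}\in o(\tau_n)$ is consumed, since it guarantees that the fluctuation window $\sqrt{\lambda_n\log(1/\tau_n)}$ is negligible against the margin $\asymp\log M_n=\lambda_n/2$ while still leaving room for the residual ball contribution $O(s_n)$. A secondary subtlety is that the hypothesis supplies only the one–sided bound $r_n(t)\ge\frac{2t}{1+t}$, which is saturated at $t=1$, so exact recovery of $\bX$ is unavailable and one must settle for concentration on a vanishing ball---which is exactly why the strengthened recovery notion of \Cref{def:aon}, rather than the plain $\MMSE_n\to0$ of \cite{niles2020all}, is the appropriate target and why the condition on $\tau_n$ cannot be removed.
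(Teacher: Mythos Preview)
Your high–level reduction is in line with the paper: use the I–MMSE relation to translate a quantitative bound on the normalized mutual information (equivalently, on $\tfrac{1}{\lambda_n}D(\Q_{\lambda_n,n}\|\Q_{0,n})$) into the desired $o(\tau_n)$ control of $\MMSE_n(\beta\lambda_n)$ for $\beta>1$, and defer the $\beta<1$ side to \cite{niles2020all}. Your monotonicity step
\[
\MMSE_n(\beta\lambda_n)\le\frac{\phi_n(\beta)-\phi_n(\beta')}{\beta-\beta'}\le\frac{1-\phi_n(\beta')}{\beta-\beta'}
\]
is in fact a cleaner substitute for the paper's Lemma~1, which passes from $D$ to $\MMSE$ by interchanging a limit and a derivative. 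Up to this point the two arguments coincide.

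Where you diverge, and where there is a genuine gap, is in how you obtain the quantitative bound $1-\phi_n(\beta')=o(\tau_n)$. You propose to prove it by \emph{direct posterior concentration} at snr $\beta'\lambda_n$: partition candidates by overlap level, kill low overlaps by a first–moment union bound, and handle $t\in(t^\ast,1-s_n)$ by ``conditioning $\bZ$ to be typical level by level''. The first–moment exponent you write, $1-\tfrac{2t}{1+t}+2\beta't-\beta'$, is indeed negative at $t=0$ but equals $\beta'>0$ at $t=1$, so the Markov bound is vacuous for $t$ close to $1$; and you do not specify what ``typical $\bZ$'' means in a way that avoids a union bound over the $M_n^{1-r_n(t)}$ candidates at each level (which would reintroduce a max–of–Gaussians term of order $\sqrt{\lambda_n}$ in the exponent and again spoil the estimate near $t=1$). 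The paper sidesteps this entirely: it works at the critical snr $\beta=1$ and bounds $D(\Q_{\lambda_n,n}\|\Q_{0,n})$ by a conditional second moment of the likelihood ratio, conditioning on the \emph{single} event
\[
\Omega_n=\bigl\{\,\bigl|\langle\bX,\bY\rangle-\sqrt{\lambda_n}\bigr|\le C_n\,\bigr\},\qquad C_n=\sqrt{\log\lambda_n},
\]
which controls only the fluctuation of the planted signal's inner product with $\bY$. With this one truncation the second moment reduces cleanly to $\sup_{t\in[0,1]}\!\big[\tfrac{t}{t+1}-\tfrac{r_n(t)}{2}\big]+o(\tau_n)$, which is $o(\tau_n)$ under the hypothesis; the choice $C_n=\sqrt{\log\lambda_n}$ is exactly what consumes $\lambda_n^{-1/2+\epsilon}\in o(\tau_n)$. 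This specific conditioning event---on the planted signal only, not on every candidate---is the missing key idea in your sketch.

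A minor point: your write–up oscillates between bounding $H(\bX\mid\bY)$ and bounding $\MMSE$ directly via posterior concentration; either target would suffice, but mixing the two obscures the argument. If you want to salvage the direct route, the cleanest fix is to abandon the level–by–level conditioning and instead observe that $1-\phi_n(\beta')\le\tfrac{2}{\lambda_n}D(\Q_{\lambda_n,n}\|\Q_{0,n})$ by Lipschitzness, then bound the right–hand side exactly as the paper does.
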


\begin{proof}
The first part of \Cref{def:aon} related to \Cref{eq:aon} follows easily noting that the assumption given here is stricter than the one given in \cite{niles2020all}. We hence have to prove only that the stronger asymptotics holds in the recovery regime. The proof follows the steps of the proof in \cite{niles2020all} and mainly uses the widely known I-MMSE relation that relates the MMSE to the mutual information $I(\bX, \bY)$ and hence to the $D(\Q_{\lambda,n} || \Q_{0,n})$. It then uses the conditional second-moment method to bound such divergence. We first have the following bound that connects the KL divergence and the MMSE, that is proved in the \ifdefined\LONG appendix \else long version of the paper \fi using the I-MMSE relation.
\begin{lemma}
\label{le:i_mmse_rel}
If $\frac{1}{\lambda_n} D(\Q_{\lambda_n, n} || \Q_{0, n}) \in o(\tau_n)$ then for any $\beta > 1$ constant, $$\MMSE_n(\beta \lambda_n) \in o(\tau_n).$$
\end{lemma}
We can now bound the KL divergence $\frac{1}{\lambda_n} D(\Q_{\lambda_n, n} || \Q_{0, n})$ conditioning on a high probability event defined as follows.
\begin{definition}
\label{def:high_prob_event}
Let $\Q_{xy} = \P_n \otimes \Q_{y|x}$ the joint probability distribution of the vectors $(\bX, \bY)$ of problem defined in \Cref{eq:model} with snr $\lambda_n$. A series of events $\Omega_n \subset supp(\P_n) \otimes supp(\Q_{\lambda_n,n})$ occurs with high probability $1-o(\tau_n)$ uniformly over $\bX$ if
\begin{equation}
\Q_{xy}[\Omega_n | \bX] = 1 - o(\tau_n)
\end{equation}
for any $\bX \in supp(\P_n)$.
\end{definition}
Let us define as $\tilde{\Q}_{\lambda,n}$ the probability distribution of $\bY$ condition on a high probability event $\Omega_n$. Then the following bound holds.
\begin{lemma}
\label{le:kl_bound_conditioned}
If $\Omega_n$ is an event that occurs with uniform high probability $1 - o(\tau_n)$ then 
\begin{equation*}
\frac{1}{\lambda_n} D(\Q_{\beta \lambda_n, n} || \Q_{0, n}) \le \frac{1}{\lambda_n} D(\tilde{\Q}_{\beta \lambda_n, n} || \Q_{0, n}) + o(\tau_n)
\end{equation*}
\end{lemma}
To complete the proof, we need a claim that relates the KL given by the conditional distribution $\tilde{\Q}_{\lambda, n}$ of $\bY$ given a high probability event, to the overlap rate function of the problem. We first introduce the high probability events as follows.

\begin{lemma}
\label{le:omega_n}
Given a sequence $C_n$ with $\frac{1}{C_n} e^{-C_n^2/2} \in o(\tau_n)$, the event $$\Omega_n = \{ (\bX,\bY) \colon |\langle \bX, \bY \rangle - \sqrt{\lambda_n}| \le C_n\}$$ satisfies \Cref{def:high_prob_event}.
\end{lemma}
We can hence prove the following.
\begin{lemma}
\label{le:bound_conditioned_kl_rate}
For any $\epsilon > 0$ constant, if $\lambda_n^{-1/2+\epsilon} \in o(\tau_n)$, conditioning on the event $\Omega_n = \{ (\bX, \bY) \colon |\langle \bX, \bY \rangle - \sqrt{\lambda_n}| \le \sqrt{\log \lambda_n} \}$, the following bound holds:
\begin{equation*}
\frac{1}{\lambda_n} D(\tilde{\Q}_{\beta \lambda_n, n} || \Q_{0, n}) \le \sup_{t \in [0, 1]} \left[ \frac{t}{t+1} - \frac{r_n(t)}{2} \right] + o(\tau_n).
\end{equation*}
\end{lemma}
We can then finally bound the KL divergence conditioning on the events $\Omega_n$ defined in \Cref{le:bound_conditioned_kl_rate} as 
\begin{align*}
\frac{1}{\lambda_n} D(\Q_{\lambda_n, n} || \Q_{0, n}) &\le \frac{1}{\lambda_n} D(\tilde{\Q}_{\lambda_n, n} || \Q_{0, n}) + o(\tau_n) \\
&\le  \sup_{t \in [0, 1]} \left[ \frac{t}{t+1} - \frac{r_n(t)}{2} \right] + o(\tau_n) \\
&\in o(\tau_n)
\end{align*}
where the first inequality comes from \Cref{le:kl_bound_conditioned}, \Cref{le:omega_n} and the assumption that $\lambda_n^{-1/2 + \epsilon} \in o(\tau_n)$, the second inequality comes from \Cref{le:bound_conditioned_kl_rate} and the final inclusion is due to the assumption of the theorem on the rate function $r_n(t)$. The missing proofs of the lemmas are \ifdefined\LONG postponed to the appendix. \else given in the long version of the paper.\fi
\end{proof}

\section{Maximum-likelihood estimation}
\label{sec:mle}
We here study the MLE, showing that a weaker AoN phenomenon extends to the behaviour of this estimator in the case of the sparse tensor-PCA model. 

\begin{definition}
\label{def:MLE}
The MLE for the generic model in \Cref{eq:model} is the estimator that maximizes the likelihood as
\begin{equation*}
\bX_{\MLE} = \argmax\limits_{\hat{\bX}(\bY) \in \bar{\mathcal{S}}_{n-1}} \Q_{y|x} (\bY | \bX(\bY))
\end{equation*}
\end{definition}
The following characterization of the MLE follows easily from the definition.
\begin{lemma}
\label{le:mle}
The MLE minimizes the probability of error $$\EP_n(\hat{\bX}) \coloneqq \P_{n} \Q_{y|x}[\hat{\bX}(\bY) \neq \bX] = \E\left[\mathbbm{1}_{\{\| \hat{\bX}(\bY) - \bX\|^2 >0\}} \right]$$
\end{lemma}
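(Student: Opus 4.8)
The plan is to recognise $\EP_n$ as the Bayes risk under the $0$--$1$ loss and to invoke the classical fact that, for a uniform prior, the Bayes-optimal decision rule for that loss is the maximum-likelihood rule. First I would condition on the observation and rewrite
\begin{equation*}
\EP_n(\hat{\bX}) = \E_{\bY}\!\left[\, \P_{n}\big[\hat{\bX}(\bY) \neq \bX \mid \bY\big] \,\right],
\end{equation*}
so that the minimisation over estimators $\hat{\bX}$ decouples across the realisations of $\bY$: it is enough to pick, for each $\by$, a value $\hat{\bX}(\by)$ minimising the posterior risk $\bx' \mapsto \P_{n}[\bX \neq \bx' \mid \bY = \by]$, and then check that the resulting map is a measurable function of $\by$.

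Next I would evaluate this posterior risk. Since $\P_{n}$ is supported on $supp(\P_n)$, one has $\P_{n}[\bX \neq \bx' \mid \bY = \by] = 1$ for every $\bx' \notin supp(\P_n)$, so an optimal choice always lies in $supp(\P_n)$, where it maximises $\P_{n}[\bX = \bx' \mid \bY = \by]$. By Bayes' rule this posterior mass equals $\P_{n}(\bx')\,\Q_{y|x}(\by|\bx')/\Q_{\lambda,n}(\by)$; because $\P_n$ is uniform on its support the prior factor is the same constant $1/M_n$ for every admissible $\bx'$, while $\Q_{\lambda,n}(\by)$ does not depend on $\bx'$, hence the posterior maximiser coincides with $\argmax_{\bx' \in supp(\P_n)} \Q_{y|x}(\by|\bx')$, i.e. with $\bX_{\MLE}(\by)$ (equivalently, since $\|\bx'\| = 1$ on the support, with $\argmax_{\bx'} \langle \bY, \bx' \rangle$). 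Therefore $\EP_n(\bX_{\MLE}) = \min_{\hat{\bX}} \EP_n(\hat{\bX})$, which is the assertion; the equality displayed in the lemma is just the triviality that a probability equals the expectation of its indicator, combined with $\{\hat{\bX}(\bY)\neq\bX\} = \{\|\hat{\bX}(\bY)-\bX\|^2>0\}$.

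I do not expect a genuine obstacle: this is in essence the textbook optimality of the MAP decoder for the $0$--$1$ loss under a uniform prior, and it uses none of the structure in \Cref{def:MLE} beyond the uniformity of $\P_n$. The two points that deserve a line of care are the legitimacy of the pointwise minimisation — granted by the tower property together with the measurability of $\by \mapsto \argmax_{\bx'\in supp(\P_n)}\Q_{y|x}(\by|\bx')$, an argmax over the finite set $supp(\P_n)$ with any fixed tie-breaking rule — and the observation that, since any estimator taking values outside $supp(\P_n)$ incurs error $1$, only the restriction of the likelihood to $supp(\P_n)$ is relevant for $\EP_n$, so the maximiser in question is exactly the one of \Cref{def:MLE} read on the support of the posterior.
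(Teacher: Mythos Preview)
Your proposal is correct and follows essentially the same route as the paper: both rewrite the error probability so that it is minimised pointwise in $\bY$ by the posterior mode, and then observe that under the uniform prior $\P_n$ the MAP rule coincides with the MLE. The paper does this via the identity $\EP_n(\hat{\bX}) = 1 - \int \Q_{y|x}(\bY\mid\hat{\bX}(\bY))\,\P_n(\hat{\bX}(\bY))\,d\bY$ and pointwise maximisation of the integrand, while you condition on $\bY$ and invoke Bayes' rule; these are the same argument, yours being slightly more explicit about measurability, tie-breaking, and the irrelevance of values outside $supp(\P_n)$.
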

According to the latter lemma and \Cref{def:estimator}, we hence characterize the MLE as the optimal estimator according to the 0-1 loss, hence we can denote
$\bX_{\zerone} = \bX_{\MLE}$ and by
\begin{align*}
\MEP_n(\lambda) &= \P_{n} \Q_{y|x}[\bX_{\MLE}(\bY) \neq \bX] \\
&= \E\left[\mathbbm{1}_{\{\| \bX_{MLE} - \bX\|^2 >0\}} \right]    
\end{align*}
the minimum error probability obtained by such estimator. Based on the same definition, the constrained version is further defined. 

\subsection{Application to the sparse tensor-PCA problem}
In the following we assume for $d \ge 2$ the following sparse tensor-PCA model with observations 
\begin{equation}
\label{eq:model_PCA}
\bY = \sqrt{\lambda} \bx^{\otimes d} + \bZ,
\end{equation}
that corresponds to the additive Gaussian model defined in \Cref{eq:model} using $\bX = \bx^{\otimes d}$, with $\bx \in \reals^p$ and $n = p^d$, and considering the Frobenius norm for tensors in $\reals^n$ \footnote{Note that this problem can also encompass the planted problem in hypergraph, with observations in the \emph{upper-triangular part} of the tensor as 
\begin{equation*}
\bY = (\sqrt{\lambda} \bx^{\otimes d} + \bZ) \mathbbm{1}_{\{i_1 < \dots < i_d\}}
\end{equation*}
and with $n = \binom{p}{d}$. Results easily extend to the hypergraph variation seamlessly.}. A discrete uniform prior $\tilde{\P}_p$ over $\mathcal{S}_{p-1}$ induces a discrete uniform prior $\P_n$ over $\mathcal{S}_{n-1}$, hence the assumption of the model defined in \Cref{eq:model} are satisfied. Here and in the following we assume $\tilde{\P}_p$ to be a Bernoulli prior over the subset of the unit sphere with $k$ binary entries, hence 
$$\bx \in \left\{0,\frac{1}{\sqrt{k}}\right\}^p \cap \mathcal{S}_{p-1}=\mathcal{C}_{p,k} = supp(\tilde{\P}_p)$$
The cardinality of the hypothesis space is hence $M_p = \binom{p}{k}$, and $\lambda_n = \log M_p = k \log\left(\frac{p}{k}\right)(1 + o(1))$. The prior $\tilde{\P}_p$ maps to the uniform prior $\P_n$ over the space $supp(\P_n) \subsetneq \mathcal{C}_{n, s} =\{0, s^{-1/2}\}^n \cap \mathcal{S}_{n-1}$, where $s=k^d$. Note here the difference between the $supp(\P_n)$, that is the set of tensors formed as $\bx^{\otimes d}$ with $\bx \in \mathcal{C}_{p,k}$, and the set $\mathcal{C}_{n,s}$, that is the set of tensors with \emph{any} $s$ entries equal to $s^{-1/2}$. We here study the constrained estimators $\CMMSE$ and $\CMEP$ on the set $\mathcal{C}_{n,s}$, as it allows an easy characterization in terms of the unconstrained one. The main theorem of this section gives a sufficient condition for an AoN phenomenon to hold for the MEP.
\begin{theorem}
\label{th:aon_MEP}
For the sparse Bernoulli tensor-PCA model defined in \Cref{eq:model_PCA}, with $k \in o\left(\log^{\frac{1}{4d -1 }} p \right)$, the $\MEP$ satisfies the weak AoN transition as:
\begin{align*}
&\liminf\limits_{n\to \infty} \MEP_n(\beta \lambda_n)  \ge \frac{1}{4} \ \qquad \beta < 1 \\
&\lim_{n\to \infty} \MEP_n(\beta \lambda_n)
\ \ =  0 \ \ \qquad \beta > 1
\end{align*} 
The same transition holds for the $\CMEP_n(\beta \lambda_n)$.
\end{theorem}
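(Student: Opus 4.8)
The plan is to transfer the quantitative all-or-nothing behaviour of the mean-square error supplied by \Cref{th:rate} to the $0$-$1$ loss, passing through the constrained estimators on $\mathcal{C}_{n,s}$, and to treat the two regimes separately. First, the statements for $\CMEP_n$ are inherited from those for $\MEP_n$: for the $0$-$1$ loss the Bayes-optimal estimator constrained to $\mathcal{C}_{n,s}$ in the sense of \Cref{def:estimator} still places all its mass on $supp(\P_n)\subseteq\mathcal{C}_{n,s}$ (the posterior $\P_n(\cdot\mid\bY)$ charges nothing else), so $\CMEP_n(\lambda)=\MEP_n(\lambda)$; likewise the constrained square-loss estimator is just the hard thresholding of the posterior mean $\E[\bX\mid\bY]$ to its $s$ largest coordinates, which is the characterization used throughout. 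Thus it suffices to analyse $\MEP_n$.

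\emph{Recovery} ($\beta>1$). Write $\pi_{\bx}(\bY)=\P_n(\bx\mid\bY)$, let $\pi^\star(\bY)=\P_n(\bX\mid\bY)$ be the posterior mass at the true signal, and $\mu(\bY)=\sum_{\bx}\pi_{\bx}\bx$. Using $\|\bX\|=1$ one has $\MMSE_n(\lambda)=1-\E\|\mu\|^2$, and since all inner products on $supp(\P_n)$ are non-negative with $\rho_{\max}:=\max_{\bx\neq\bx'}\langle\bx,\bx'\rangle=((k-1)/k)^{d}=1-\Theta(d/k)$ we get $\MMSE_n(\lambda)\ge(1-\rho_{\max})\bigl(1-\E[\pi^\star]\bigr)$, where $\E[\pi^\star]=\E_{\bY}\sum_{\bx}\pi_{\bx}(\bY)^2$. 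The MLE equals the MAP rule and is correct whenever $\pi^\star(\bY)>1/2$, so a Markov step gives $\MEP_n(\lambda)\le2\bigl(1-\E[\pi^\star]\bigr)\le\tfrac{2}{1-\rho_{\max}}\MMSE_n(\lambda)\le\tfrac{2}{1-\rho_{\max}}\CMMSE_n(\lambda)=\Theta(k)\,\CMMSE_n(\lambda)$. Hence it is enough to show $\CMMSE_n(\beta\lambda_n)\in o(1/k)$; this is exactly \Cref{th:rate} (in the form of \Cref{th:aon_CMMSE}) applied with $\tau_n$ of order $1/k$ up to a slowly varying factor. Concretely, the overlap rate function of the Bernoulli tensor prior is $r_n(t)=(1+o(1))\,t^{1/d}$, and for every $d\ge2$ one has $t^{1/d}(1+t)\ge2\,t^{1/d+1/2}\ge2t$ on $[0,1]$, so the hypothesis $r_n(t)\ge\frac{2t}{1+t}-o(\tau_n)$ holds as soon as $\tau_n$ dominates the discretisation gap $\asymp1/k$ and the logarithmic correction $\asymp\log k/\log p$ in $r_n$; combining this with the side condition $\lambda_n^{-1/2+\epsilon}\in o(\tau_n)$ is what forces the scaling $k\in o\bigl(\log^{\frac{1}{4d-1}}p\bigr)$.

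\emph{Impossibility} ($\beta<1$). Here one shows the posterior cannot localise. The MLE is correct on $\bY$ only if $\pi^\star(\bY)=\max_{\bx}\pi_{\bx}(\bY)$, and since $\|\mu(\bY)\|^2\ge\max_{\bx}\pi_{\bx}(\bY)^2$ (again by non-negativity of the inner products) one has $\P[\text{MLE correct}]=\E_{\bY}[\max_{\bx}\pi_{\bx}]\le\sqrt{\E\|\mu\|^2}=\sqrt{1-\MMSE_n}$. It therefore suffices to show that $\liminf_n\MMSE_n(\beta\lambda_n)$ — equivalently $\liminf_n\CMMSE_n(\beta\lambda_n)$ — is bounded away from $0$; this is the \emph{nothing} half of the (constrained) AoN, which is the easy direction of \Cref{th:rate}: the conditional second-moment / KL estimate of \Cref{le:bound_conditioned_kl_rate}, now carried over the \emph{enlarged} hypothesis set $\mathcal{C}_{n,s}$ and conditioned on $\Omega_n$ from \Cref{le:omega_n}, shows that for $\beta<1$ the normalized divergence stays strictly below its saturation value, which (together with the matching lower bound of \cite{niles2020all}) keeps $\E[\pi^\star]$ bounded away from $1$; tracking the constant yields $\liminf_n\MEP_n(\beta\lambda_n)\ge1/4$.

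The skeleton above is routine; essentially all the work is in the asymptotics. In the recovery regime \Cref{th:rate} must be invoked not with \emph{some} $\tau_n\to0$ but with $\tau_n$ as small as $k^{-1}\cdot\omega(1)$ (or $k^{-d}\cdot\omega(1)$ if one argues directly with the minimal separation in $\mathcal{C}_{n,s}$), while still respecting $\lambda_n^{-1/2+\epsilon}\in o(\tau_n)$; reconciling these two demands is precisely what pins the admissible range $k\in o\bigl(\log^{\frac{1}{4d-1}}p\bigr)$ and requires tracking every lower-order term in $r_n(t)$, in the conditional KL bound, and in the Gaussian-tail estimates underlying \Cref{le:omega_n}. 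In the impossibility regime the delicate point is that the second-moment computation has to be carried out over the larger set $\mathcal{C}_{n,s}$ rather than $supp(\P_n)$ — which is exactly where the \emph{constrained} hypothesis space genuinely enters — with the combinatorial bounds sharp enough to produce an explicit constant. I expect this coupled bookkeeping (the joint choice of $\tau_n$, the $k$-range, and the constant in the second moment) to be the main obstacle.
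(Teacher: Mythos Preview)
Your core inequalities are correct and in fact sharper than the paper's, but the exposition mis-identifies what your own argument needs and grafts on machinery that belongs to the paper's approach, not yours.

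\textbf{The paper's route.} The paper works entirely through the constrained set $\mathcal{C}_{n,s}$ and its minimum nonzero separation $2/s$. For recovery, Markov gives $\CMEP_n\le\tfrac{s}{2}\,\CMMSE_n$ (\Cref{le:mmse_mep_relations}), and the $\Top_s$ characterisation of $\bX_{\Cl_2}$ (\Cref{le:mmse_cmmse}) shows $\MMSE_n<\epsilon\Rightarrow\CMMSE_n<4\epsilon s$. Chaining, one needs $\MMSE_n\in o(1/s^2)=o(1/k^{2d})$, and plugging this $\tau_n$ into the side condition of \Cref{th:rate} is exactly what produces the exponent $1/(4d-1)$ in the $k$-range. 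For impossibility the paper uses Paley--Zygmund, $\MEP_n\ge\tfrac14\,\MMSE_n^2$, combined with $\MMSE_n\to1$ for $\beta<1$ from \cite{niles2020all}.

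\textbf{Your route, and what it actually yields.} Your two inequalities $\MEP_n\le\tfrac{2}{1-\rho_{\max}}\,\MMSE_n=\Theta(k)\,\MMSE_n$ and $\MEP_n\ge1-\sqrt{1-\MMSE_n}$ are both correct and both stronger than the paper's. With the first you only need $\MMSE_n\in o(1/k)$, which under \Cref{th:rate} holds already for $k=o(\log^{1-\delta}p)$ for any $\delta>0$ --- far weaker than $k\in o(\log^{1/(4d-1)}p)$; your claim that your argument ``forces'' the latter exponent is wrong (that exponent comes only from the paper's cruder $s^2$ loss). With the second, since the ``nothing'' half of the MMSE AoN gives $\MMSE_n(\beta\lambda_n)\to1$, you obtain the \emph{strong} conclusion $\MEP_n\to1$, not merely $\liminf\ge1/4$. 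You are underclaiming your own argument on both sides.

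\textbf{What to remove.} Once $\MEP_n$ is tied directly to $\MMSE_n$ as above, the detour through $\CMMSE_n$, the $\Top_s$ map, and the separation in $\mathcal{C}_{n,s}$ is unnecessary; those devices are essential to the paper's coarser argument, not yours. In particular, the assertion that for impossibility ``the second-moment computation has to be carried out over the larger set $\mathcal{C}_{n,s}$'' is unfounded: the MMSE impossibility (with limit $1$) follows from the KL/I-MMSE analysis over the original support $supp(\P_n)$, exactly as in \cite{niles2020all} and the first line of the proof of \Cref{th:rate}, with no enlargement of the hypothesis set.
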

We conjecture that the MLE undergoes a strict AoN transition, but further work is necessary to establish the full characterization of the MLE in the impossibility regime.
\begin{proof}
The main idea of the proof is to relate the MMSE to the MEP studying the constrained counterpart of both. For these latter two quantities, a simple first-moment method can be applied, as there exists a minimum non-vanishing distance between any two points in the constrained set $\mathcal{C}_{n,s}$. Hence, for any estimator $\hat{\bX}(\bY)$, using the Markov inequality we can derive the following bound on its error probability:
\begin{align*}
\EP_n(\hat{\bX}) = \E\left[\mathbbm{1}_{\{\|\hat{\bX} - \bX \|^2 > 0 \}}\right] &= \P\left[\| \hat{\bX} - \bX \|^2 \ge \frac{2}{s}\right] \\
&\le \frac{s}{2} \E\left[\| \hat{\bX} - \bX \|^2\right]
\end{align*}
Note that this is only possible if the estimator is constrained in $\mathcal{C}_{n,s}$. The following bounds can hence be derived with the full proof given in the appendix.
\begin{lemma}
\label{le:mmse_mep_relations}
Given the problem in \Cref{eq:model_PCA}, the following bounds hold:
\begin{equation}
\label{eq:ub_p_error}
\frac{s}{2} \CMMSE_n(\lambda) \ge 
\CMEP_n(\lambda)
\end{equation}
\begin{align}
\label{eq:lb_p_error}
\CMEP_n(\lambda) &\ge \frac{1}{4} \CMMSE_n(\lambda)^2 \\
\MEP_n(\lambda) &\ge \frac{1}{4} \MMSE_n(\lambda)^2
\end{align}
\end{lemma}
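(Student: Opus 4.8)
The plan is to prove the three inequalities of \Cref{le:mmse_mep_relations} by pointwise-in-$\bY$ comparisons between the $0$--$1$-optimal estimator and the (constrained) squared-loss-optimal estimator, relying on three elementary features of the Bernoulli tensor model: (i) every vector in $supp(\P_n)$ and in $\mathcal{C}_{n,s}$ has unit norm; (ii) all such vectors have non-negative entries, so their pairwise inner products are non-negative; (iii) two distinct points of $\mathcal{C}_{n,s}$ are at squared distance at least $2/s$. For \Cref{eq:ub_p_error}, I would feed the $\CMMSE$-optimal estimator $\bX_{\Cl_2}$ into the Markov bound already displayed in the proof of \Cref{th:aon_MEP}: since $\bX_{\Cl_2}(\bY)\in\mathcal{C}_{n,s}$ and $\bX\in supp(\P_n)\subset\mathcal{C}_{n,s}$, property (iii) gives $\EP_n(\bX_{\Cl_2})=\P[\|\bX_{\Cl_2}-\bX\|^2\ge 2/s]\le \tfrac{s}{2}\,\E[\|\bX_{\Cl_2}-\bX\|^2]=\tfrac{s}{2}\CMMSE_n(\lambda)$, while $\CMEP_n(\lambda)\le \EP_n(\bX_{\Cl_2})$ because $\bX_{\Cl_2}$ is an admissible $\mathcal{C}_{n,s}$-valued competitor for the $0$--$1$ loss.

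For the reverse bounds in \Cref{eq:lb_p_error}, write $\mu=\mu(\bY)=\E[\bX\mid\bY]$ and let $\mathbf{v}^\star=\mathbf{v}^\star(\bY)\in supp(\P_n)$ be a posterior mode, so $q(\bY)\coloneqq 1-\P[\bX=\mathbf{v}^\star\mid\bY]=1-\max_{\mathbf{v}\in supp(\P_n)}\P[\bX=\mathbf{v}\mid\bY]$; since a posterior mode is $0$--$1$-optimal both over $\bar{\mathcal{S}}_{n-1}$ and over $\mathcal{C}_{n,s}$ (the posterior is carried by $supp(\P_n)\subseteq\mathcal{C}_{n,s}$), we have $\MEP_n(\lambda)=\CMEP_n(\lambda)=\E[q(\bY)]$. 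The key inequality, obtained by discarding cross terms via (ii), is $\langle\mathbf{v}^\star,\mu\rangle=\sum_{\mathbf{v}}\P[\bX=\mathbf{v}\mid\bY]\,\langle\mathbf{v}^\star,\mathbf{v}\rangle\ge \P[\bX=\mathbf{v}^\star\mid\bY]=1-q(\bY)$. Using (i), the conditional $\CMMSE$ loss is $\E[\|\bX-\bX_{\Cl_2}(\bY)\|^2\mid\bY]=2-2\max_{\hat{\mathbf{v}}\in\mathcal{C}_{n,s}}\langle\hat{\mathbf{v}},\mu\rangle\le 2-2\langle\mathbf{v}^\star,\mu\rangle\le 2q(\bY)$ (inserting $\mathbf{v}^\star$ as a feasible, suboptimal competitor), and the conditional $\MMSE$ loss is $\E[\|\bX-\mu\|^2\mid\bY]=1-\|\mu\|^2\le 1-\langle\mathbf{v}^\star,\mu\rangle^2\le 1-(1-q(\bY))^2\le 2q(\bY)$ (Cauchy--Schwarz with $\|\mathbf{v}^\star\|=1$). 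In both cases the conditional loss $L(\bY)$ satisfies $L(\bY)\le 2q(\bY)$, so taking expectations yields $\CMMSE_n(\lambda)\le 2\,\CMEP_n(\lambda)$ and $\MMSE_n(\lambda)\le 2\,\MEP_n(\lambda)$. Since also $\CMMSE_n(\lambda)=2-2\,\E[\langle\bX,\bX_{\Cl_2}\rangle]\le 2$ by (ii) and $\MMSE_n(\lambda)=1-\E[\|\mu\|^2]\le 1$, we conclude $\tfrac14\CMMSE_n(\lambda)^2\le\tfrac12\CMMSE_n(\lambda)\le\CMEP_n(\lambda)$ and $\tfrac14\MMSE_n(\lambda)^2\le\tfrac12\MMSE_n(\lambda)\le\MEP_n(\lambda)$, which are precisely the two claimed inequalities (in fact with a factor-$2$ margin).

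I do not foresee a real obstacle: the lemma is a repackaging of Markov's inequality together with a one-line Cauchy--Schwarz comparison. The two steps requiring care are (a) identifying the conditional $\CMMSE$ loss with $2-2\max_{\hat{\mathbf{v}}\in\mathcal{C}_{n,s}}\langle\hat{\mathbf{v}},\mu\rangle$, which rests on $\|\bX\|^2=\|\hat{\mathbf{v}}\|^2=1$ and legitimises substituting a posterior mode as a feasible but suboptimal competitor, and (b) the systematic use of non-negativity of the entries (the Bernoulli assumption), without which neither $\langle\mathbf{v}^\star,\mu\rangle\ge 1-q(\bY)$ nor $\CMMSE_n(\lambda)\le 2$ need hold. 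The only quantity that is delicate downstream is the loss factor $s/2$ in \Cref{eq:ub_p_error}: in the proof of \Cref{th:aon_MEP} it must be outweighed by the recovery asymptotics $\tau_n$ of the (constrained) $\MMSE$ all-or-nothing transition, i.e.\ one needs $s\,\tau_n\to 0$, and keeping track of this together with the error terms from \Cref{th:rate} is what eventually pins down the admissible sparsity $k\in o(\log^{1/(4d-1)}p)$.
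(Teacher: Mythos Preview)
Your argument for \Cref{eq:ub_p_error} is exactly the paper's: feed the constrained $l_2$-optimal estimator into the Markov inequality, using that two distinct points of $\mathcal{C}_{n,s}$ are at squared distance at least $2/s$.

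For the lower bounds in \Cref{eq:lb_p_error} you take a genuinely different route. The paper applies the Paley--Zygmund inequality to the nonnegative random variable $\|\hat{\bX}-\bX\|^2$, obtaining for every estimator $\EP_n(\hat{\bX})\ge \E[\|\hat{\bX}-\bX\|^2]^2/\E[\|\hat{\bX}-\bX\|^4]$, and then bounds the fourth moment by $4$ via $\|\hat{\bX}-\bX\|^2\le 2$. You instead work conditionally on $\bY$: identifying $\MEP_n=\CMEP_n=\E[q(\bY)]$ with $q(\bY)=1-\max_{\mathbf v}\P[\bX=\mathbf v\mid\bY]$, you use nonnegativity of the inner products to get $\langle\mathbf v^\star,\mu\rangle\ge 1-q(\bY)$ and then Cauchy--Schwarz to bound both conditional losses by $2q(\bY)$. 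This yields the \emph{linear} inequalities $\MMSE_n\le 2\,\MEP_n$ and $\CMMSE_n\le 2\,\CMEP_n$, strictly sharper than the quadratic ones stated in the lemma; you only square at the end to match the statement. Both arguments ultimately lean on the Bernoulli nonnegativity (for the paper, to justify $\|\hat{\bX}-\bX\|^2\le 2$; for you, to drop the cross terms in $\langle\mathbf v^\star,\mu\rangle$ and to ensure $\CMMSE_n\le 2$). What the Paley--Zygmund route buys is that it treats the minimisations uniformly and does not need to identify the $0$--$1$ optimum as a posterior mode; what your route buys is the tighter linear bound and the clean byproduct $\MEP_n=\CMEP_n$, which makes the constrained and unconstrained cases a single computation.
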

The inequality given in \Cref{eq:ub_p_error} relates now the $\CMMSE$ to $\CMEP$, such that if the $\CMMSE$ is small enough, then the $\MEP$ is small too. We can further have a bound that relates the $\MMSE$ to the $\CMMSE$ in the same direction, so as to derive a chain of inequalities between the $\MMSE$ and the $\MEP$. This is given by the following lemma.
\begin{lemma}
\label{le:mmse_cmmse}
For any $\epsilon > 0$, $\MMSE_n(\lambda) < \epsilon$ if and only if $ \CMMSE_n(\lambda) < 4 \epsilon s$.
\end{lemma}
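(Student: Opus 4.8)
The plan is to prove the equivalence by sandwiching the two quantities, $\MMSE_n(\lambda)\le\CMMSE_n(\lambda)\le 4s\,\MMSE_n(\lambda)$; the left inequality is elementary, while the right one carries the content and is where the factor $s$ enters.

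The lower bound is immediate: the feasible set $\mathcal{C}_{n,s}$ of the constrained estimator is contained in the unit ball $\bar{\mathcal{S}}_{n-1}$, so restricting the minimization of $\E\big[\|\bX-\hat{\bX}(\bY)\|^2\big]$ to it cannot decrease the optimum, hence $\CMMSE_n(\lambda)\ge\MMSE_n(\lambda)$; this already handles the direction in which $\CMMSE_n(\lambda)$ being small forces $\MMSE_n(\lambda)$ to be small.

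For the upper bound I would first pin down the constrained MSE estimator explicitly. Since every element of $\mathcal{C}_{n,s}$ has unit norm (the prior lives on the sphere and $\mathcal{C}_{n,s}\subset\mathcal{S}_{n-1}$), for each $\bY$ one has $\E\big[\|\bX-a\|^2\mid\bY\big]=2-2\langle\E[\bX\mid\bY],a\rangle$ for every $a\in\mathcal{C}_{n,s}$, so the constrained minimizer is the Euclidean projection $\bX_{\CMMSE}(\bY)=\Pi_{\mathcal{C}_{n,s}}(\E[\bX\mid\bY])$, i.e.\ the point of $\mathcal{C}_{n,s}$ nearest the posterior mean. I then exploit the discreteness of $\mathcal{C}_{n,s}$: any two distinct elements have supports differing in at least two coordinates, so their squared distance is at least $2/s$. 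Consequently, on the good event $G=\{\,\|\E[\bX\mid\bY]-\bX\|^2<\tfrac{1}{2s}\,\}$ the true signal $\bX$ is strictly the nearest point of $\mathcal{C}_{n,s}$ to $\E[\bX\mid\bY]$, whence $\bX_{\CMMSE}(\bY)=\bX$ and the squared error vanishes; on $G^{c}$ it is at most $2$, because $\bX$ and $\bX_{\CMMSE}(\bY)$ both lie on the unit sphere and have non-negative entries, so $\langle\bX,\bX_{\CMMSE}(\bY)\rangle\ge 0$. Markov's inequality gives $\P[G^{c}]\le 2s\,\MMSE_n(\lambda)$, and therefore $\CMMSE_n(\lambda)=\E\big[\|\bX-\bX_{\CMMSE}(\bY)\|^2\,\mathbbm{1}_{G^{c}}\big]\le 2\,\P[G^{c}]\le 4s\,\MMSE_n(\lambda)$, which combined with the trivial bound gives the equivalence of the two smallness conditions with the stated constant.

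I expect the main obstacle to be the upper bound, and within it the bookkeeping of the constant: one must use (i) that $\bX_{\CMMSE}$ is a projection, so that $\|\E[\bX\mid\bY]-\bX_{\CMMSE}(\bY)\|\le\|\E[\bX\mid\bY]-\bX\|$; (ii) that the quantization scale of $\mathcal{C}_{n,s}$ forces a minimal squared separation of exactly $2/s$, which is why a ball of radius $\tfrac12\sqrt{2/s}$ around $\bX$ already pins the projection to $\bX$; and (iii) the uniform bound $2$ on squared errors on $G^{c}$. Treating any of these three points loosely would inflate the constant and break the stated $4\epsilon s$ form; beyond them the argument is a routine truncation/Markov computation.
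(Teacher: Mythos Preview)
Your proposal is correct and follows essentially the same route as the paper. The paper identifies the constrained minimizer as $\Top_s(\E[\bX\mid\bY])$ (which is exactly your Euclidean projection $\Pi_{\mathcal{C}_{n,s}}$), proves a geometric lemma that $\|\bU-\bV\|^2<\tfrac{1}{2s}$ forces $\Top_s(\bV)=\bU$, and then runs the same truncation/Markov argument on the event $\{\|\E[\bX\mid\bY]-\bX\|^2\le\delta\}$ with $\delta\uparrow\tfrac{1}{2s}$; your separation-plus-triangle-inequality step is precisely the content of that geometric lemma, and the constants match.
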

\begin{proof}
Let us define the (constrained-) MSE distance as, respectively, $$\CMSE_n(\bX, \bY) = \| \bX_{\Cl_2}(\bY) - \bX \|^2$$ and 
$$\MSE_p(\bX, \bY) = \| \bX_{l_2}(\bY) - \bX \|^2.$$
We can write the expectation conditioning on the event $A$ that the first distance is smaller then a given $\delta < \frac{1}{2 s}$,
$$A = \{(\bX, \bY) \colon \CMSE_n(\bX, \bY) \le \delta \}$$ 
as:
\begin{align}
\label{eq:cmmse_decomposition}
\CMMSE_n(\lambda) &= \E\left[\CMSE_n(\bX, \bY)\right]\nonumber \\
&=\E\left[\CMSE_n(\bX, \bY) | A \right] \P[A] + \nonumber\\
& \hspace{1cm} + \E\left[\CMSE_n(\bX, \bY) | A^c \right] \P[A^c]. 
\end{align}
We now characterize the optimal constrained $l_2$ estimator as the simple rounding of the top entries of the standard posterior average estimator.
\begin{lemma}
\label{le:top_k}
The optimal estimator constrained in the hypothesis space $\mathcal{C}_{n,s}$ for the problem in \Cref{eq:model_PCA} for the MSE reads
$$\bX_{\Cl_2} = \argmin\limits_{\hat{\bX}(\bY) \in \mathcal{C}_{n, s}} \E[\| \hat{\bX}(\bY) - \bX\|^2] = \Top_s\left( \E[\bX | \bY] \right),$$
where the $\Top_s(\cdot)$ operator rounds the top $s$ entries of $\bX$ to $s^{-1/2}$, and zeros out all other entries.
\end{lemma}
Now we can easily note that the following geometrical lemma:
\begin{lemma}
\label{le:top_k_geometric}
For any integer $n$ and $s$ and $\bU \in \mathcal{C}_{n,s}$, $\bV \in \left[0, s^{-1/2}\right]^n$ and $\delta < \frac{1}{2 s}$, such that $\|\bU - \bV\|^2 \le \delta$, $$\Top_s(\bV) = \bU.$$
\end{lemma}
Combining \Cref{le:top_k} and \Cref{le:top_k_geometric} it follows that 
\begin{equation}
\label{eq:cmse_zero}
\E\left[ \CMSE_n(\bX, \bY) | A \right] = 0    
\end{equation} 
as $\Top_s(\bX_{\Cl_2}) = \bX_{\Cl_2}$. 
Using the same decomposition for the MMSE with respect to the event 
$$B = \{(\bX, \bY) \colon \MSE_n(\bX, \bY) \le \delta \}$$ 
we get
\begin{align}
\label{eq:mmse_decomposition}
\MMSE_n(\beta) &= \E[\MSE_n(\bX,\bY) | B ] \P[B] + \nonumber \\
&\hspace{1cm} +\E[\MSE_n(\bX,\bY) | B^c] \P[B^c] < \epsilon
\end{align}
Bounding as $\E[\MSE_n(\bX, \bY)| B] \ge 0$, $\P[B] \ge 0$ and $\E[\MSE_n(\bX,\bY) | B^c ] \ge \delta$ we get from \Cref{eq:mmse_decomposition}
\begin{equation}
\label{eq:b_c}
\P[B^c] < \frac{\epsilon}{\delta}.
\end{equation}
From \Cref{le:top_k}, we have further that for $\delta < \frac{1}{2 s}$,
\begin{align*}
B &\subset \{\CMSE_n(\bX,\bY) = 0 \} \subset A
\end{align*}
hence that 
\begin{equation}
\label{eq:a_c}
\P[A^c] \le \P[B^c].
\end{equation}
Plugging Equations \ref{eq:cmse_zero}, \ref{eq:b_c} and \ref{eq:a_c} into the decomposition in \Cref{eq:cmmse_decomposition} and using the fact that $\CMSE_n(\bX, \bY) \le 2$ we finally get
$$\CMMSE_n(\lambda) \le \frac{2 \epsilon}{\delta}.$$ The theorem follows from the arbitrariness of $\delta < \frac{1}{2 s}$.
\end{proof}
Plugging in the result of \Cref{le:mmse_cmmse} and \Cref{eq:ub_p_error} we get the further lemma that relates the MMSE in the recovery regime to the C-MMSE and the MEP. 
\begin{lemma}
$\MMSE_n(\lambda_n) \in o(1/s)$ if and only if $\CMMSE_n(\lambda_n) \in o(1)$. If $\MMSE_n(\lambda_n) \in o(1/s^2)$ then $\MEP_n(\lambda_n) \in o(1)$
\end{lemma}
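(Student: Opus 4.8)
The plan is to read the statement off directly from the inequalities already in hand: the two-sided estimate of \Cref{le:mmse_cmmse}, the upper bound \Cref{eq:ub_p_error}, and the trivial monotonicity $\CMEP_n(\lambda)\ge\MEP_n(\lambda)$ — the special case of the remark after \Cref{def:estimator} for the $\zerone$ loss, which applies because $\mathcal{C}_{n,s}\subset\bar{\mathcal{S}}_{n-1}$ contains the support of $\P_n$. Each inequality is then simply turned into the corresponding statement about the order of decay. The one technical point to keep in mind is that \Cref{le:mmse_cmmse} will be invoked with a tolerance $\epsilon=\epsilon_n$ that scales with $n$ (through $s$); this is legitimate since that lemma holds for every fixed $\epsilon>0$ at every fixed $n$.

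First I would prove the equivalence $\MMSE_n(\lambda_n)\in o(1/s)\iff\CMMSE_n(\lambda_n)\in o(1)$. For the forward direction, fix $\eta>0$; since $\MMSE_n(\lambda_n)\in o(1/s)$ one has $\MMSE_n(\lambda_n)<\eta/(4s)$ for $n$ large, and \Cref{le:mmse_cmmse} with $\epsilon=\eta/(4s)$ then gives $\CMMSE_n(\lambda_n)<4s\cdot\eta/(4s)=\eta$ eventually; as $\eta$ is arbitrary this is $\CMMSE_n(\lambda_n)\in o(1)$. For the converse, fix $c>0$ and apply the other direction of \Cref{le:mmse_cmmse} with $\epsilon=c/s$: $\CMMSE_n(\lambda_n)\to 0$ means $\CMMSE_n(\lambda_n)<4c=4\epsilon s$ eventually, whence $\MMSE_n(\lambda_n)<c/s$ eventually, i.e. $\MMSE_n(\lambda_n)\in o(1/s)$.

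For the second implication I would proceed in three short steps. Starting from $\MMSE_n(\lambda_n)\in o(1/s^2)$, \Cref{le:mmse_cmmse} with $\epsilon=c/(4s^2)$ (for arbitrary $c>0$) gives $\CMMSE_n(\lambda_n)<4s\cdot c/(4s^2)=c/s$ eventually, i.e. $\CMMSE_n(\lambda_n)\in o(1/s)$; then \Cref{eq:ub_p_error} yields $\CMEP_n(\lambda_n)\le\frac{s}{2}\CMMSE_n(\lambda_n)<c/2$ eventually, so $\CMEP_n(\lambda_n)\in o(1)$; and finally $0\le\MEP_n(\lambda_n)\le\CMEP_n(\lambda_n)$ forces $\MEP_n(\lambda_n)\in o(1)$. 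There is no genuine obstacle here — the argument is pure bookkeeping — the only thing that needs attention is keeping the two factors of $s$ (one coming from \Cref{le:mmse_cmmse}, one from \Cref{eq:ub_p_error}) pointing in the right direction, which is precisely why the hypothesis in the $\MEP$ statement must be the stronger $o(1/s^2)$ and not merely $o(1/s)$.
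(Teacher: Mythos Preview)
Your proposal is correct and follows the same route as the paper, which simply notes that the statement is obtained by ``plugging in the result of \Cref{le:mmse_cmmse} and \Cref{eq:ub_p_error}''. You have unpacked that sentence carefully, choosing the right $\epsilon=\epsilon_n$ scalings and invoking the monotonicity $\MEP_n\le\CMEP_n$ to close the last step, but the argument is identical in substance.
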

In the same regime $\beta > 1$, we can hence use the results on the generalized AoN, \Cref{th:rate}, to have the MMSE to be $o(1/s^2)$. For such theorem to hold, we use the following lemma on the overlap rate function of the sparse tensor-PCA problem.
\begin{lemma}[Proof given in Proposition 3, \cite{niles2020all}]
\label{le:rate_bernoulli_pca}
For the Bernoulli sparse tensor-PCA problem with signal $\bX = \bx^{\otimes d}$, $d \ge 2$, and $\bx \in \{0, 1/\sqrt{k}\}^p \cap \mathcal{S}_{p-1}$ the following bound on the overlap rate function of the tensors $\bX, \bX'$ holds for any $t \in [0, 1]$:
\begin{equation*}
r_n(t) \ge \sqrt{t} - \frac{\mathcal{O}(1)}{\lambda_n}
\end{equation*}
\end{lemma}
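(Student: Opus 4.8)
The plan is to use the product structure of the tensor overlap to turn the event $\{\langle\bX,\bX'\rangle\ge t\}$ into a tail event for the size of a random set intersection, whose exponential rate can be read off directly. For two independent draws $\bX=\bx^{\otimes d}$, $\bX'=\bx'^{\otimes d}$ with $\bx,\bx'\in\mathcal{C}_{p,k}$, the Frobenius inner product factorizes as $\langle\bx^{\otimes d},\bx'^{\otimes d}\rangle=\langle\bx,\bx'\rangle^{d}$, and since every nonzero entry equals $1/\sqrt{k}$ we have $\langle\bx,\bx'\rangle=J/k$, where $J=|\mathrm{supp}(\bx)\cap\mathrm{supp}(\bx')|$ counts the common coordinates. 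Hence $\langle\bX,\bX'\rangle=(J/k)^{d}$ and, writing $j_0=\lceil k\,t^{1/d}\rceil$, the overlap event is the purely combinatorial event $\{J\ge j_0\}$; for fixed $\bx$ the variable $J$ is hypergeometric (drawing a uniform $k$-subset from $[p]$, of which the $k$ coordinates of $\mathrm{supp}(\bx)$ are ``marked'').

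First I would bound the upper tail of $J$. A union bound over the $\binom{k}{j_0}$ candidate $j_0$-subsets of $\mathrm{supp}(\bx)$ that could be contained in $\mathrm{supp}(\bx')$ gives
\[
\P_n^{\otimes2}[J\ge j_0]\ \le\ \binom{k}{j_0}\frac{\binom{p-j_0}{k-j_0}}{\binom{p}{k}}\ \le\ \binom{k}{j_0}\Big(\tfrac{k}{p}\Big)^{j_0}\ \le\ \Big(\tfrac{e\,k^{2}}{j_0\,p}\Big)^{j_0},
\]
where the middle step uses $\binom{p-j_0}{k-j_0}/\binom{p}{k}=\prod_{i=0}^{j_0-1}\frac{k-i}{p-i}\le(k/p)^{j_0}$. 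I would also record that this leading term controls the whole tail: since $\P[J=j+1]/\P[J=j]\approx k^{2}/(jp)\ll1$ in the sparse regime, summing over $j\ge j_0$ inflates the bound only by a factor $1+o(1)$.

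Next I would extract the rate. Applying $-\tfrac{1}{\log M_n}\log(\cdot)$ to the displayed bound, using that the exponent $j_0\log\frac{j_0p}{ek^{2}}$ is increasing in $j_0$ so that the lower bound $j_0\ge k\,t^{1/d}$ may be substituted, and recalling $\log M_n=\log\binom{p}{k}=\lambda_n=k\log(p/k)(1+o(1))$, yields
\[
r_n(t)\ \ge\ \frac{k\,t^{1/d}\big[\log(p/k)+\tfrac1d\log t-1\big]}{\log\binom{p}{k}}\ =\ t^{1/d}\,\frac{k\log(p/k)}{\log\binom{p}{k}}+\frac{k\,t^{1/d}\big[\tfrac1d\log t-1\big]}{\log\binom{p}{k}}.
\]
The first factor tends to $1$ and the second term is a vanishing correction (its numerator is bounded for $t\in[0,1]$), so the leading behaviour is $t^{1/d}$. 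Finally, for $d\ge2$ and $t\in[0,1]$ one has $t^{1/d}\ge t^{1/2}=\sqrt{t}$, which converts the leading term into the claimed $\sqrt{t}$ and completes the bound $r_n(t)\ge\sqrt{t}-\mathcal{O}(1)/\lambda_n$.

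The main obstacle is the precise bookkeeping of the lower-order terms so that the correction collapses to the stated order $\mathcal{O}(1)/\lambda_n$ uniformly in $t$, rather than the coarser estimate that the crude union bound produces. This demands (i) a sharp hypergeometric tail asymptotic controlling the entropy-type term $-\tau\log\tau-(1-\tau)\log(1-\tau)$ with $\tau=j_0/k$, together with the relative gap between $k\log(p/k)$ and $\log\binom{p}{k}$; (ii) handling the ceiling in $j_0=\lceil k\,t^{1/d}\rceil$, which makes $r_n$ piecewise constant while $\sqrt{t}$ varies --- here one checks that on each interval $t\in\big((\tfrac{j_0-1}{k})^{d},(\tfrac{j_0}{k})^{d}\big]$ the constant value $r_n(t)\approx j_0/k$ still dominates $\sqrt{t}\le(j_0/k)^{d/2}\le j_0/k$, so discreteness is in fact benign; and (iii) treating the degenerate endpoints $t\to0$ and $t=1$ separately. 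Pinning all constants into $\mathcal{O}(1)/\lambda_n$ is exactly the careful asymptotic control carried out in Proposition~3 of \cite{niles2020all}.
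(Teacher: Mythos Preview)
Your approach is correct and is precisely the argument behind Proposition~3 of \cite{niles2020all}, which is all the paper invokes here: factor the tensor overlap as $\langle\bX,\bX'\rangle=(J/k)^d$ with $J$ hypergeometric, bound the tail $\P[J\ge\lceil k\,t^{1/d}\rceil]$ combinatorially, and read off the rate $t^{1/d}\ge\sqrt t$. Since the paper gives no independent proof beyond that citation, there is nothing further to compare.
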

Combining \Cref{le:rate_bernoulli_pca}, \Cref{th:rate}, \Cref{le:mmse_cmmse} and \Cref{le:mmse_mep_relations} we finally get the claim of the main theorem as $$\sqrt{t} \ge \frac{2t}{1+t}$$ for $t \in [0, 1]$ and as $k \in o\left(\log^{\frac{1}{4d -1 }} p \right)$ implies $$\lambda_n^{-1/2 + \epsilon} = (k \log(p/k) + o(1))^{-1/2 + \epsilon} \in o(1/s^2) = o(1/k^{2d})$$
\end{proof}
Analogously, we can prove the following transition for the constrained MMSE.
\begin{theorem}
\label{th:aon_CMMSE}
For the sparse Bernoulli tensor PCA model with $k \in o\left(\log^{\frac{1}{2d -1 }} p \right)$, the $\CMMSE$ satisfies the AoN transition 
\begin{align*}
&\liminf\limits_{n\to \infty} \CMMSE_n(\beta \lambda_n)  \ge 1 \ \qquad \beta < 1 \\
&\lim_{n\to \infty} \CMMSE_n(\beta \lambda_n)
\ \ =  0 \ \ \qquad \beta > 1
\end{align*} 
\end{theorem}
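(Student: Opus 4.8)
The plan is to handle the two regimes separately, recycling the machinery built for \Cref{th:aon_MEP}. For the impossibility regime $\beta < 1$, I would start from the trivial comparison $\CMMSE_n(\lambda) \ge \MMSE_n(\lambda)$ (valid because $\mathcal{C}_{n,s} \subset \bar{\mathcal{S}}_{n-1}$, as noted after \Cref{def:estimator}), so it suffices to prove that the unconstrained MMSE obeys the AoN with limiting constant equal to $1$. The AoN for $\MMSE_n$ is the statement of \cite{niles2020all}, and it also follows from the first part of \Cref{th:rate} with any convenient vanishing $\tau_n$ (e.g.\ $\tau_n = \lambda_n^{-1/4}$): the required bound $r_n(t) \ge \frac{2t}{1+t} - o(\tau_n)$ holds on $[0,1]$ by \Cref{le:rate_bernoulli_pca} together with $\sqrt{t} \ge \frac{2t}{1+t}$, since $\mathcal{O}(1)/\lambda_n \in o(\tau_n)$ as $\lambda_n \to \infty$. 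The only additional point is to compute the constant: the best observation-independent estimator for the MSE is the prior mean $\E[\bX]$, so $c = \lim_n \bigl(1 - \|\E[\bX]\|^2\bigr)$, and under the sparse Bernoulli prior $\E[\bX]$ has essentially constant entries with $\|\E[\bX]\|^2 = \Theta(s/n) \to 0$ because $s = k^d = o(p^d) = o(n)$; hence $c = 1$. Consequently $\liminf_{n\to\infty} \CMMSE_n(\beta\lambda_n) \ge \lim_{n\to\infty} \MMSE_n(\beta\lambda_n) = 1$.

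For the recovery regime $\beta > 1$, I would imitate the chain of implications in the proof of \Cref{th:aon_MEP}, but now it suffices to reach the weaker conclusion $\MMSE_n(\beta\lambda_n) \in o(1/s)$ instead of $o(1/s^2)$. Applying \Cref{th:rate} with $\tau_n = 1/s = 1/k^d$ (and, when $k$ stays bounded, with a slowly vanishing $\tau_n$ instead): the rate-function hypothesis again follows from \Cref{le:rate_bernoulli_pca} and $\sqrt{t} \ge 2t/(1+t)$, using $\mathcal{O}(1)/\lambda_n \in o(1/k^d)$, which holds because $k^{d-1} = o(\log p) = o(\lambda_n/k)$ under the assumption. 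The asymptotic hypothesis $\lambda_n^{-1/2+\epsilon} \in o(\tau_n)$ is precisely what the condition $k \in o\bigl(\log^{1/(2d-1)} p\bigr)$ buys: with $\lambda_n = k\log(p/k)(1 + o(1))$ one checks $\lambda_n^{-1/2+\epsilon} \in o(1/k^d)$ for all small enough $\epsilon > 0$, the exponent $1/(2d-1)$ being the analogue of $1/(4d-1)$ in \Cref{th:aon_MEP} with one fewer power of $s$ to absorb. Hence $\MMSE_n(\beta\lambda_n) \in o(1/s)$, and \Cref{le:mmse_cmmse} applied with the vanishing choice $\epsilon = \epsilon_n$ satisfying $\epsilon_n s \to 0$ (equivalently the derived lemma ``$\MMSE_n \in o(1/s)$ iff $\CMMSE_n \in o(1)$'') yields $\CMMSE_n(\beta\lambda_n) \in o(1)$, i.e.\ $\lim_{n\to\infty}\CMMSE_n(\beta\lambda_n) = 0$.

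The main obstacle is the asymptotic bookkeeping in the recovery regime: verifying that $k \in o\bigl(\log^{1/(2d-1)} p\bigr)$ really forces $\lambda_n^{-1/2+\epsilon} \in o(1/k^d)$ for a suitable fixed $\epsilon > 0$, and that the passage from $\MMSE_n(\beta\lambda_n) \in o(1/s)$ to $\CMMSE_n(\beta\lambda_n) \in o(1)$ via \Cref{le:mmse_cmmse} is legitimate. Everything else is either a direct appeal to results already in hand (\Cref{th:rate}, \Cref{le:rate_bernoulli_pca}, \Cref{le:mmse_cmmse}) or the elementary ball-versus-sphere comparison; the only mild subtlety is the case of bounded $k$, where $\tau_n = 1/s$ fails to vanish and one must instead run \Cref{th:rate} with a loose $\tau_n$, which is harmless since then $s$ is constant and $\CMMSE_n \to 0 \iff \MMSE_n \to 0$.
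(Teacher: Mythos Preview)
Your proposal is correct and follows essentially the same route as the paper's own proof: the impossibility side via the elementary comparison $\CMMSE_n \ge \MMSE_n$ together with the MMSE AoN (constant $1$), and the recovery side via \Cref{th:rate} with the weaker target $\MMSE_n(\beta\lambda_n)\in o(1/s)$ followed by \Cref{le:mmse_cmmse}. You supply more detail than the paper (the explicit computation $c=1$ and the bounded-$k$ edge case), but the skeleton is identical.
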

\begin{proof}
The proof follows the same steps of the proof of \Cref{th:aon_MEP}, with only two main differences. First, in the impossibility regime, we can use the stronger bound $\CMMSE_n(\lambda) \ge \MMSE_n(\lambda)$ in place of \Cref{eq:lb_p_error} to get the first part of the theorem. In the recovery regime, we can use the weaker requirement $\MMSE_n(\beta \lambda_n) \in o(1/s)$ that is satisfied by the assumption of the theorem $k \in o\left(\log^{\frac{1}{2d -1 }} p \right)$.
\end{proof}

\section{Conclusion}
In this paper, we analysed the maximum-likelihood estimator for the sparse tensor-PCA problem with Bernoulli prior. We established that this estimator undergoes a weak AoN transition and conjectured that this transition is equivalent to the MMSE transition.
The proof follows from the connection of the MLE to the MMSE using the first and second-moment method in the constrained signal space, and hence it is of independent interest as it can lead to further results from the community. 

While this paper sets a first step in understanding a wider range of optimal estimators in sparse high-dimensional inference problems, a general theory of the all-or-nothing statistical transition is still lacking. This theory could provide a wider understating of the phenomenon, including the analysis of vectorial estimators for planted matrix and tensor PCA problems, that is not here considered and is carried out in the dense setting using rigorous tools of statistical physics and replica methods (see \cite{lesieur2017statistical}).

The same methods established recently that tractable estimators, like the approximate-message-passing algorithms, undergo the same all-or-nothing transition in the sparse matrix PCA problem \cite{barbier2020all}. The extension of these results to the sparse tensor-PCA problem and currently optimal algorithms for this problem, like averaged gradient descent \cite{biroli2020iron} and sum-of-squares algorithms \cite{hopkins2015tensor}, is of crucial importance.

\section*{Acknowledgement}
This work was supported in part by
NSF Center on Science of Information
Grants CCF-0939370 and NSF Grants CCF-1524312, CCF-2006440, CCF-2007238.

\clearpage
\appendices
\crefalias{section}{appsec}

\section{Postponed proofs of \texorpdfstring{\Cref{sec:aon}}{}}
\label{app:proofs_aon}
\begin{proof}[Proof of \Cref{le:i_mmse_rel}]
\label{proof:i_mmse_rel}
Using, respectively, the maximum entropy bound and 1/2 Lipschitz-continuity of the function (see \Cref{le:properties_kl} and \cite{niles2020all}) we can see first that
\begin{align*}
0 \le \frac{1}{\lambda_n} D(\Q_{\beta \lambda_n, n} || \Q_{0, n}) - \frac{1}{2}(\beta - 1) \le \\
& \hspace{-2cm} \le \frac{1}{\lambda_n} D(\Q_{\lambda_n, n} || \Q_{0, n}) \in o(\tau_n)
\end{align*}
where the first inequality comes from the maximum entropy bound, the second from Lipschitz-continuity, and the inclusion follows from the assumption of the theorem. For convenience let us denote $$\frac{1}{\lambda_n} D(\Q_{\beta \lambda_n, n} || \Q_{0, n}) = \frac{1}{2} (\beta -1) + f_n(\beta)$$ where $\lim_{n \to \infty} \frac{f_n(\beta)}{\tau_n} = 0$ for any $\beta > 1$. 
We can now use the I-MMSE relation (see \cite{verdu1994generalizing, niles2020all}), such that $\MMSE_n(\beta \lambda_n) = 1 - 2 \frac{d}{d \beta} \frac{1}{\lambda_n} D(\Q_{\beta \lambda_n, n} || \Q_{0, n})$.
\begin{align*}
\MMSE(\beta \lambda_n) &= 1 - 2 \frac{d}{d \beta} \left(\frac{1}{2} (\beta -1) + f_n(\beta)\right) \\
&= - 2 \frac{d}{d \beta} f_n(\beta)
\end{align*}
hence 
\begin{align*}
\lim_{n \to \infty} \frac{\MMSE(\beta \lambda_n)}{\tau_n} &= \lim_{n \to \infty} - 2 \frac{1}{\tau_n} \frac{d}{d \beta} f_n(\beta) \\
&= \lim_{n \to \infty} - 2 \frac{d}{d \beta} \frac{1}{\tau_n} f_n(\beta) = 0
\end{align*}
where the second equality follows from the linearity of differentiation and the third from the interchanging of limit and differentiation under uniform convergence.
\end{proof}

\begin{proof}[Proof of \Cref{le:kl_bound_conditioned}]
Following the proof of Theorem 5 in \cite{banks2018information}, and defining the function $Z(\bY) = \frac{\Q_{\beta \lambda_n, n}(\bY)}{\Q_{0 , n}(\bY)}$ we have that 
\begin{align}
\label{eq:bound_Z}
D(\tilde{\Q}_{\beta \lambda_n, n} || \Q_{0, n}) - D(\Q_{\beta \lambda_n, n} || \Q_{0, n}) \ge \nonumber\\
&\hspace{-3.5cm} \ge \E_{\tilde{\Q}_{\beta \lambda_n, n}} \log Z(\bY) - \E_{\Q_{\beta \lambda_n, n}} \log Z(\bY)
\end{align}
Using the definition of the conditional pdf, we have
\begin{align}
\label{eq:conditional_q}
\tilde{\Q}_{\beta \lambda_n, n}(\bY) &= \Q_{\beta \lambda_n, n}(\bY | \Omega_n) = \E_{\P_n} \Q_{y|x}(\bY | \Omega_n, \bX) \nonumber \\
&= \E_{\P_n} \frac{\Q_{y|x}(\bY | \bX) \Q_{xy}[\Omega_n | \bX, \bY]}{\Q_{xy}[\Omega_n | \bX]} \nonumber\\
&= \E_{\P_n} \frac{\Q_{y|x}(\bY | \bX) \mathbbm{1}_{\Omega_n}(\bX,\bY)}{\Q_{xy}[\Omega_n | \bX]},
\end{align}
where $\mathbbm{1}_A(\cdot)$ is the indicator function of set $A$. Plugging \Cref{eq:conditional_q} into \Cref{eq:bound_Z} we get
\begin{align}
\label{eq:lb_kl_diff}
&D(\tilde{\Q}_{\beta \lambda_n, n} || \Q_{0, n}) - D(\Q_{\beta \lambda_n, n} || \Q_{0, n}) \ge \nonumber\\
& \ge \E_{\P_n} \frac{\E_{\Q_{y|x}} \left[ (\mathbbm{1}_{\Omega_n}(\bX,\bY) -  \Q_{xy}[\Omega_n | \bX] ) \log Z(\bY) \right]}{\Q_{xy}[\Omega_n | \bX]}
\end{align}
Using the Cauchy-Schwartz inequality we can bound the expectation over $\bY$ as 
\begin{align}
\label{eq:lb_cs}
&\left|\E_{\Q_{y|x}} \left[ (\mathbbm{1}_{\Omega_n}(\bX,\bY) -  \Q_{xy}[\Omega_n | \bX] ) \log Z(\bY) \right]\right| \le \nonumber\\
&\sqrt{\E_{\Q_{y|x}} (\mathbbm{1}_{\Omega_n}(\bX,\bY) -  \Q_{xy}[\Omega_n | \bX] )^2 \cdot  \E_{\Q_{y|x}} \left[ \log^2 Z(\bY) \right] }
\end{align}
It is easy to see that
\begin{align*}
\E_{\Q_{y|x}} &\left[ (\mathbbm{1}_{\Omega_n}(\bX,\bY) -  \Q_{xy}[\Omega_n | \bX] )^2 \right] = \\  &= \E_{\Q_{y|x}} \big[ (\mathbbm{1}_{\Omega_n}(\bX,\bY) +  \Q_{xy}[\Omega_n | \bX]^2 + \\
&\hspace{1cm} - 2 \cdot \mathbbm{1}_{\Omega_n}(\bX,\bY) \Q_{xy}[\Omega_n | \bX]^2 \big] \\
&=\E_{\Q_{y|x}} \left[ \mathbbm{1}_{\Omega_n}(\bX,\bY) \right] + \Q_{xy}[\Omega_n | \bX]^2 + 
\\& \hspace{1cm} - 2 \cdot \E_{\Q_{y|x}} \left[ \mathbbm{1}_{\Omega_n}(\bX,\bY) \right] \Q_{xy}[\Omega_n | \bX] = 
\\&= Q_{xy}[\Omega_n | \bX] - \Q_{xy}[\Omega_n | \bX]^2
\end{align*}
hence recombining the latter and \Cref{eq:lb_kl_diff} and \Cref{eq:lb_cs} we get
\begin{align*}
D(&\tilde{\Q}_{\beta \lambda_n, n} || \Q_{0, n}) - D(\Q_{\beta \lambda_n, n} || \Q_{0, n}) \ge \\
&\ge - \E_{\P_n} \sqrt{\frac{1- \Q_{xy}[\Omega_n | \bX]}{\Q_{xy}[\Omega_n | \bX]} \E_{\Q_{y|x}} \left[\log^2 Z(\bY) \right] }.
\end{align*}
Using again the Cauchy-Schwartz inequality over the expectations on $\P_n$ we finally get 
\begin{align*}
D(&\tilde{\Q}_{\beta \lambda_n, n} || \Q_{0, n}) - D(\Q_{\beta \lambda_n, n} || \Q_{0, n}) \ge \\
&- \sqrt{\E_{\P_n} \left(\frac{1- \Q_{xy}[\Omega_n | \bX]}{\Q_{xy}[\Omega_n | \bX] }\right)^2 } \cdot \sqrt{ \E_{\P_n} \E_{\Q_{y|x}} \left[\log^2 Z(\bY) \right]} \\
&= - o(\tau_n) \cdot \sqrt{\E_{\Q_{\beta \lambda_n, n}} \left[\log^2 Z(\bY) \right]}
\end{align*}
where the equality comes from the assumption on the event $\Omega_n$. Using now the result from Proposition 3 in \cite{niles2020all}, 
\begin{equation*}
\sqrt{\E_{\Q_{\beta \lambda_n, n}} \left[\log^2 Z(\bY) \right]} = \mathcal{O}(\log M_n)
\end{equation*}
we get the claim.
\end{proof}

\begin{proof}[Proof of \Cref{le:omega_n}]
We can see easily that
\begin{align*}
\Q_{xy}[\Omega_n | \bX] &= \Q_{y|z}[\Omega_n | \bX] \\
&= \P_z [|\langle \bX, \bZ \rangle| \le C_n] \\
&= 2 \phi(C_n) - 1 \\
&= 1 - \sqrt{\frac{2}{\pi}} \frac{1}{C_n} e^{-C_n^2/2}\left(1 + \mathcal{O}\left(\frac{1}{C_n}\right)\right)
\end{align*}
where in the second equality we used the fact that $\bY = \sqrt{\lambda_n} \bX + \bZ$ and that  $\| \bX \|^2 = 1$, the third follows from the fact that $\langle \bX, \bZ \rangle$ is a univariate Gaussian random variable distributed as $\mathcal{N}(0, 1)$, and $\phi(\cdot)$ is the cdf of the standard Gaussian, with asymptotics given, for large $x$, as $\phi(x) = 1 - \frac{1}{\sqrt{2 \pi} x} e^{-x^2/2}\left(1+\mathcal{O}\left(\frac{1}{x^2}\right)\right)$. We hence have the claim as $$1 - \Q_{xy}[\Omega_n | \bX] = \mathcal{O}\left(\frac{1}{C_n} e^{-C_n^2/2}\right) \in o(\tau_n).$$
\end{proof}

\begin{proof}[Proof of \Cref{le:bound_conditioned_kl_rate}]
Using the Jensen inequality, it can be easily seen that for any two distributions 
\begin{align*}
D(p || q) = \E_p \log \frac{p(x)}{q(x)} \le \log\left( \E_p \frac{p(x)}{q(x)}\right) &= \log\left( \E_q \left(\frac{p(x)}{q(x)} \right)^2 \right)
\end{align*}
To bound the KL, we can hence study the ratio $p/q$. Using \cref{eq:conditional_q} we can hence write
\begin{equation*}
\frac{\tilde{\Q}_{\beta \lambda_n, n}(\bY)}{\Q_{0, n}(\bY)} = \E_{\P_n} \frac{1}{\Q_{xy}[\Omega_n | \bX]} \frac{\Q_{\beta \lambda_n, n}(\bY | \bX)}{\Q_{\beta 0, n}(\bY)} \mathbbm{1}_{\Omega_n}(\bX,\bY).
\end{equation*}
Plugging in the definition of the model for $\bY$ we can easily see that 
\begin{align*}
\frac{\Q_{\beta \lambda_n, n}(\bY | \bX)}{\Q_{ 0, n}(\bY)} &= \frac{\exp(-\frac{1}{2} \| \bY - \sqrt{\lambda_n}\bX \|^2)}{\exp(-\frac{1}{2} \| \bY \|^2)} \\
&=\exp\left(\sqrt{\lambda_n} \langle \bX, \bY \rangle - \frac{\lambda_n}{2} \right)
\end{align*}
where in the second inequality we used the fact that $\|\bX\|^2 = 1$. Using the latter, we can obtain
\begin{align*}
\left(\frac{\tilde{\Q}_{\beta \lambda_n, n}(\bY | \bX)}{\Q_{ 0, n}(\bY)}\right)^2 = \\
& \hspace{-3.3cm} = \E_{\P_n^{\otimes 2}}  \frac{\exp\left(\sqrt{\lambda_n} \langle \bX + \bX', \bY \rangle - \lambda_n \right)}{\Q_{xy}[\Omega_n | \bX] Q_{xy}[\Omega_n | \bX'] } \mathbbm{1}_{\Omega_n}(\bX,\bY) \mathbbm{1}_{\Omega_n}(\bX',\bY) 
\end{align*}
Using the fact that for $C_n = \sqrt{\log \lambda_n}$ and $\lambda_n^{-1/2 + \epsilon} \in o(\tau_n)$ we can satisfy the assumption of \Cref{le:omega_n}, we can exchange the small-o notation and the integrals as
\begin{align*}
\left(\frac{\tilde{\Q}_{\beta \lambda_n, n}(\bY | \bX)}{\Q_{ 0, n}(\bY)}\right)^2 = \\
& \hspace{-2cm} = (1 + o(\tau_n))\E_{\P_n^{\otimes 2}} \big[ \mathbbm{1}_{\Omega_n}(\bX,\bY) \mathbbm{1}_{\Omega_n}(\bX',\bY) \\
&\times \exp\left(\sqrt{\lambda_n} \langle \bX + \bX', \bY \rangle - \lambda_n \right)\big]
\end{align*}
We hence have a bound for the KL that reads
\begin{equation*}
\frac{1}{\lambda_n} D(\tilde{\Q}_{\beta \lambda_n, n} || \Q_{0, n}) \le \frac{1}{\lambda_n} \log \left[ \E_{\P_n^{\otimes 2}} m_n(\bX, \bX') \right] + o(\tau_n)
\end{equation*}
where $m_n$ is defined as 
\begin{align*}
m_n(\bX, \bX') \coloneqq \E_{\Q_{0, n}} \big[ \mathbbm{1}_{\Omega_n}(\bX,\bY) \mathbbm{1}_{\Omega_n}(\bX',\bY) \\
& \hspace{-3cm} \times \exp\left(\sqrt{\lambda_n} \langle \bX + \bX', \bY \rangle - \lambda_n \right) \big]
\end{align*}
and we used the fact that $\frac{o(\tau_n)}{\lambda_n} \in o(\tau_n)$ and where we used Fubini's theorem to exchange the order of the integrals. 
Note that $\Q_{0,n} = \P_z = \normal(0, 1)$. Now it is sufficient to prove that $$\frac{1}{\lambda_n} \log \left[ \E_{\P_n^{\otimes 2}} m_n(\bX, \bX') \right] \le \sup_{t \in [0, 1]} \left( \frac{t}{t+1} - \frac{r_n(t)}{2} \right) + o(\tau_n)$$ to get the claim. We can readily see that the function $m_n$ depends only on the overlap $\rho = \langle \bX, \bX' \rangle$ due to the rotational invariance of the Gaussian pdf. Using \Cref{le:ub_m_n}, the definition of $r_n$, the monotonicity of the exponential function and the simple inequality $$sup(f + g) \le \sup f + \sup g$$ we get
\begin{align*}
\frac{1}{\lambda_n} \log \E_{\P_n^{\otimes 2}} m_n(\rho) &\le \frac{\log(2 L_n)}{\lambda_n} + \\
&\hspace{-2.2cm} + \sup_{t \in [-1, 1]}\left( \left(  \frac{t}{1+t}\right)_+ - \frac{r_n(t)}{2} \right) + \frac{C_n}{\lambda_n^{1/2}} + \mathcal{O}\left(\frac{1}{L_n}\right)
\end{align*}
We can easily observe that the supremum can be limited to the interval $t \in [0, 1]$ noting that $r_n(-1) = 0$ and $r_n(t)$ is a non-negative function. 
The claim then follows easily from the assumption of $\lambda_n^{-1/2 + \epsilon} \in o(\tau_n)$ and choosing $L_n = \floor{\lambda_n^{1/2}}$ and $C_n = \sqrt{\log \lambda_n}$.
\end{proof}

\section{Postponed proofs of \texorpdfstring{\Cref{sec:mle}}{}}
\label{app:proofs_mle}
\begin{proof}[Proof of \Cref{le:mle}]
\begin{align*}
\P_{n} \Q_{y|x}[\hat{\bX}(\bY) \neq \bX] &= \E_{P_n} \E_{\Q_{y|x}} \mathbbm{1}_{\{\hat{\bX}(\bY) \neq \bX\}} \nonumber \\
& = 1 - \E_{P_n} \E_{\Q_{y|x}} \mathbbm{1}_{\{\hat{\bX}(\bY) = \bX\}} \nonumber \\
&= 1 - \int d \bY \Q_{y|x}(\bY | \hat{\bX}(\bY)) \P_n(\hat{\bX}(\bY))
\end{align*}
hence $\argmin\limits_{\hat{\bX}(\cdot)} \P_{n} \Q_{y|x}[\hat{\bX}(\bY) \neq \bX]$ satisfies for every $\bY$ 
\begin{align*}
&\left(\argmin_{\hat{\bX}(\cdot)} \P_{n} \Q_{y|x}[\hat{\bX}(\bY) \neq \bX]\right)(\bY) =\\
&\hspace{2cm} = \argmax_{\hat{\bX}(\bY)} \Q_{y|x}(\bY | \hat{\bX}(\bY)) \P_n(\hat{\bX}(\bY)),
\end{align*} 
that, for uniform prior, corresponds to the MLE estimator.
\end{proof}

\begin{proof}[Proof of \Cref{le:mmse_mep_relations}]
The two bounds are, respectively, given by the first and second-moment methods. \Cref{eq:ub_p_error} follows easily from from the Markov inequality as for any estimator $\hat{\bX}(\bY)$,
\begin{align*}
\EP_n(\hat{\bX}) = \E\left[\mathbbm{1}_{\{\|\hat{\bX} - \bX \|^2 > 0 \}}\right] &= \Pr\left[\| \hat{\bX} - \bX \|^2 \ge \frac{2}{s}\right] \\
&\le \frac{s}{2} \E\left[\| \hat{\bX} - \bX \|^2\right]
\end{align*}
hence
\begin{align*}
\CMEP_n(\lambda) &= \min_{\hat{\bX}(\bY)\in \mathcal{C}_{n,s}} \E\left[\| \hat{\bX} - \bX \|_0 \right]  \\
& \le \frac{s}{2} \min_{\hat{\bX}(\bY)\in \mathcal{C}_{n,s}}  \E[\| \hat{\bX} - \bX \|^2] = \frac{s}{2} \CMMSE_n(\lambda).
\end{align*}
To prove the second bound, we use the Paley–Zygmund inequality that reads for a general \emph{positive} random variable $Z$ and $0 \le \theta \le \E[Z]$
\begin{equation}
\label{eq:paley_zygmund}
\Pr[Z > \theta] \ge \frac{(\E[Z] - \theta)^2}{\E[Z^2]}.
\end{equation}
Using \Cref{eq:paley_zygmund} for the random variable $\|\hat{\bX} - \bX \|^2$ we obtain
\begin{equation*}
\EP_n(\hat{\bX}) = \Pr\left[\|\hat{\bX} - \bX \|^2 > 0\right] \ge \frac{\E\left[\|\hat{\bX} - \bX \|^2\right]^2}{\E\left[\|\hat{\bX} - \bX \|^4\right]}
\end{equation*}
from which we get 
\begin{align*}
\MEP_n(\lambda) &= \min_{\hat{\bX}(\bY)\in \bar{\mathcal{S}}_{n-1}} \Pr\left[\|\hat{\bX} - \bX \|^2 > 0\right] \\
& \ge \min_{\hat{\bX}(\bY)\in \bar{\mathcal{S}}_{n-1}} \frac{\E[\|\hat{\bX} - \bX \|^2]^2}{\E[\|\hat{\bX} - \bX \|^4]} \\
&\ge \frac{\left(\min_{\hat{\bX}(\bY)\in \bar{\mathcal{S}}_{n-1}} \E[\|\hat{\bX} - \bX \|^2]\right)^2}{\max_{\hat{\bX}(\bY)\in \bar{\mathcal{S}}_{n-1}} \E[\|\hat{\bX} - \bX \|^4]} \\
&\ge \frac{1}{4} \MMSE_p(\lambda)^2
\end{align*}
where in the last inequality we used the definition of the $\MMSE$ and the fact that for any two vectors $$\bm{a}, \bm{b} \in \bar{\mathcal{S}}_{n-1}, \quad \|\bm{a} - \bm{b} \|^2 \le 2.$$ The third inequality follows analogously.
\end{proof}

\begin{proof}[Proof of \Cref{le:top_k}]
\begin{align}
\E[\| \hat{\bX} - \bX\|^2] = \E\left[ \|\hat{\bX}\|^2\right] + \E\left[ \|\bX\|^2\right] - 2 \E\left[ \sum_i^n  \hat{X}_i X_i\right].
\end{align}
Given the constraint on the estimator and on the fact that $\|\bX\|^2 = \|\hat{\bX}\|^2 = 1$, the optimization problem becomes: 
$$\argmax_{\hat{\bX}(\bY) \in \mathcal{C}_{n,s}} \E_{\Q_{\lambda, n}} \sum_{i=1}^n \hat{\bX}_i (\bY) \E[X_i|\bY]$$ hence for every fixed $\bY$ the optimal estimator reads
$$\argmax_{\hat{\bX}(\bY) \in \mathcal{C}_{n,s}} \sum_i^n \hat{\bX}_i(\bY) \E[\bX|\bY]_i.$$ The theorem follows easily from linearity and from $\mathcal{C}_{n, s}$ being a binary set for which the greedy algorithm is optimal. 
\end{proof}

\begin{proof}[Proof of \Cref{le:top_k_geometric}]
\begin{align*}
\delta \ge \|\bU - \bV\|^2 &= \sum_{i \colon U_i  = s^{-1/2}} \left(s^{-1/2}-V_i\right)^2 + \sum_{i \colon U_i  = 0} V_i^2 \\
&\ge \left(\max_{i \colon U_i  = s^{-1/2}}\left(s^{-1/2}-V_i\right)\right)^2 + \left(\max_{i \colon U_i  = 0} V_i\right)^2 \\
&= \left(s^{-1/2}-\min_{i \colon U_i  = s^{-1/2}} V_i\right)^2 + \left(\max_{i \colon U_i  = 0} V_i\right)^2
\end{align*}
Multiplying both sides of the latter inequality by $s$ we get a inequality of the form $\left(1-a\right)^2 + b^2 \le s \delta < \frac{1}{2}$ for $a,b \in \left[0, 1\right]$. It is easy to observe using simple calculus that this implies $a > b$ and hence $$\min_{i \colon U_i  = s^{-1/2}} V_i > \max_{i \colon U_i  = 0} V_i.$$ Given this condition, and the definition of the $\Top_s(\cdot)$ operator, it follows that $\Top_s(\bV) = \bU$. Note that the strict inequality $\delta < \frac{1}{2 s}$ is essential to guarantee the strict inequality above and hence that there are no ties in the selection of the top $s$ entries.
\end{proof}

\section{Useful lemmas}
\label{app:lemmas}
\begin{lemma}
\label{le:mmse}
The posterior average $\E[\bX | \bY]$ is the optimal estimator (MMSE) for the $l_2$ loss $l_2(\hat{\bx}, \bx) = \| \hat{\bX} - \bX\|^2$ and the generic model in \cref{eq:model}, formally
$$\argmin_{\hat{\bX}(\bY) \in \bar{\mathcal{S}}_{n-1}} \E[\| \hat{\bX}(\bY) - \bX\|^2] = \E[\bX | \bY]$$
\end{lemma}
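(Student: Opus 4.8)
The plan is to invoke the textbook characterization of the conditional expectation as the $L^2$-projection, and then to observe that the unconstrained minimizer automatically satisfies the ball constraint, so that restricting to $\bar{\mathcal{S}}_{n-1}$ changes nothing.

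Concretely, I would first fix an arbitrary measurable $\hat{\bX}(\bY) \in \bar{\mathcal{S}}_{n-1}$ and write the mean-square error as an iterated expectation, $\E[\|\hat{\bX}(\bY) - \bX\|^2] = \E_{\bY}\big[\,\E[\|\hat{\bX}(\bY) - \bX\|^2 \mid \bY]\,\big]$. Expanding the square for a fixed value $\by$ and completing the square in $\hat{\bX}(\by)$ gives
\[
\E[\|\hat{\bX}(\by) - \bX\|^2 \mid \bY = \by] = \|\hat{\bX}(\by) - \E[\bX\mid \bY=\by]\|^2 + \E[\|\bX\|^2\mid \bY=\by] - \|\E[\bX\mid\bY=\by]\|^2,
\]
where the last two terms do not depend on $\hat{\bX}(\by)$. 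The first term is nonnegative and vanishes precisely when $\hat{\bX}(\by) = \E[\bX\mid \bY=\by]$, so the conditional MSE, and hence (after integrating over $\bY$) the full MSE, is minimized by the posterior mean.

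It then remains to check that $\E[\bX\mid\bY]$ is feasible, i.e.\ lies in $\bar{\mathcal{S}}_{n-1}$. Since $supp(\P_n)\subset\mathcal{S}_{n-1}$ we have $\|\bX\|=1$ almost surely, so Jensen's inequality applied to the convex function $\bx\mapsto\|\bx\|$ yields $\|\E[\bX\mid\bY]\| \le \E[\|\bX\|\mid\bY] = 1$. Hence the unconstrained minimizer belongs to $\bar{\mathcal{S}}_{n-1}$, and a fortiori it is also the minimizer of the problem constrained to $\bar{\mathcal{S}}_{n-1}$, which is the assertion.

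As the argument is entirely standard, there is no genuine obstacle; the only non-computational point worth spelling out is the feasibility check above, which is exactly what makes the constraint to the unit ball inactive for the $l_2$ loss. Integrability is not an issue, since $\bX$ is bounded and the loss is bounded on $\bar{\mathcal{S}}_{n-1}^{\otimes 2}$.
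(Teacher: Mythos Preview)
Your argument is correct and, in fact, slightly more careful than the paper's. The paper proceeds by differentiating $\E[\|\hat{\bX}-\bX\|^2]$ with respect to $\hat{\bX}$, setting the gradient to zero to obtain $\hat{\bX}(\bY)=\E[\bX\mid\bY]$, and then remarking that the Hessian is positive semidefinite so the stationary point is a global minimum. Your route via the orthogonal (bias--variance) decomposition of the conditional MSE is equivalent but avoids calculus and makes the minimization pointwise in $\bY$ transparent. More importantly, you explicitly verify via Jensen that $\E[\bX\mid\bY]\in\bar{\mathcal{S}}_{n-1}$, i.e.\ that the unconstrained minimizer is feasible for the ball constraint; the paper's proof omits this check, so your version actually closes a small gap in the presentation.
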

\begin{proof}
\begin{align}
\frac{\partial}{\partial \hat{\bX}} \E[\| \hat{\bX} - \bX\|^2] &= \E_{\Q_{\lambda, n}} \sum_{\bX} \  \P_n(\bX | \bY) \frac{\partial}{\partial \hat{\bX}} \| \hat{\bX} - \bX\|^2 =\\
&= \E_{\Q_{\lambda, n}} \sum_{\bX} \P_n(\bX|\bY) 2 (\hat{\bX} - \bX) = 0
\end{align}
from which it follows easily that for every $\bY$, the gradient is equal to zero if $\hat{\bX}(\bY) = \E[\bx | \bY]$. It can be easily shown that the Hessian of the loss is positive semidefinite and hence satisfies the property of having a global minimum.
\end{proof}

We here further characterize the MLE estimator as following:
\begin{lemma}[equivalent to Theorem 1 in \cite{corinzia2019exact}]
For the model defined in \Cref{eq:model}, the MLE estimator reads: 
\begin{equation}
\label{eq:mle_estimator}
\bX_{\zerone}(\bY) = \argmax\limits_{\bX \in \bar{\mathcal{S}}_n} \sum_{i_1, \dots ,i_d} Y_{i_1, \dots, i_d} X_{i_1, \dots, i_d} 
\end{equation}
\end{lemma}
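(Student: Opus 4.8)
The plan is to unfold \Cref{def:MLE} and reduce the Gaussian likelihood to a linear functional of the observation. Since $\Q_{y|x}(\bY\mid\bX)=(2\pi)^{-n/2}\exp\left(-\tfrac12\|\bY-\sqrt{\lambda}\,\bX\|^2\right)$, maximizing the likelihood over a hypothesis set is equivalent to minimizing $\|\bY-\sqrt{\lambda}\,\bX\|^2=\|\bY\|^2-2\sqrt{\lambda}\,\langle\bY,\bX\rangle+\lambda\|\bX\|^2$, and dropping the $\bX$-independent term $\|\bY\|^2$ this is the maximization of $\sqrt{\lambda}\,\langle\bY,\bX\rangle-\tfrac{\lambda}{2}\|\bX\|^2$. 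On every hypothesis class that matters here --- the prior support $supp(\P_n)\subset\mathcal{S}_{n-1}$ and the constrained sets of \Cref{sec:mle}, whose elements are all unit vectors --- the penalty $\|\bX\|^2\equiv 1$ is constant, so the maximization collapses to $\argmax\,\langle\bY,\bX\rangle$; identifying a $d$-tensor in $\reals^n$ ($n=p^d$) with its coordinate array, the Frobenius inner product $\langle\bY,\bX\rangle$ is precisely $\sum_{i_1,\dots,i_d}Y_{i_1,\dots,i_d}X_{i_1,\dots,i_d}$ (its restriction to $i_1<\dots<i_d$ in the hypergraph variant of the footnote), which is \Cref{eq:mle_estimator}.

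To also treat the estimator exactly as written in \Cref{def:MLE}, i.e. the maximizer over the full unit ball $\bar{\mathcal{S}}_{n-1}$, I would carry out the optimization in polar form. Writing a feasible point as $r\,\mathbf{u}$ with $\|\mathbf{u}\|=1$ and $r\in[0,1]$, the objective $\sqrt{\lambda}\,r\,\langle\bY,\mathbf{u}\rangle-\tfrac{\lambda}{2}r^2$ is, for every fixed $r>0$, maximized over the sphere by $\mathbf{u}^\star=\bY/\|\bY\|$, independently of $r$; the optimal radius is then $r^\star=\min\{1,\|\bY\|/\sqrt{\lambda}\}$. Hence $\bX_{\MLE}(\bY)$ is always collinear with $\bY$, so it selects the same \emph{direction} as $\argmax_{\|\bX\|\le 1}\langle\bY,\bX\rangle$ and, restricted to any family of equal-norm hypotheses, it is exactly the correlation-maximizing one --- which, via the posterior/likelihood identity in the proof of \Cref{le:mle}, is also the estimator minimizing the error probability.

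I do not expect a serious obstacle: expanding the Gaussian exponent, discarding $\bX$-independent terms, and rewriting the Frobenius product as a sum over tensor entries are all routine. The only point needing a little care is the interplay between the norm constraint in \Cref{def:MLE} and the inner-product form of the statement, and the polar decomposition above resolves it by showing that the radial component of a feasible point never changes the maximizing direction.
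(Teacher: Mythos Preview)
Your argument is correct and follows essentially the same line as the paper's own proof: expand the Gaussian log-likelihood, drop the $\bX$-independent terms, and use that $\|\bX\|^2$ is constant on the relevant hypothesis class so that only the inner-product term $\sqrt{\lambda}\sum_{i_1,\dots,i_d}Y_{i_1,\dots,i_d}X_{i_1,\dots,i_d}$ survives. The paper carries this out by writing the log-likelihood entrywise for the Bernoulli tensor prior and observing that the quadratic term equals a fixed constant; your polar-decomposition remark about the full ball $\bar{\mathcal{S}}_{n-1}$ is an additional nicety the paper does not spell out, but the core reduction is identical.
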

\begin{proof}
\begin{align*}
\log \Q_{y|x}&(\bY|\bX) = \sum_{i_1 , \dots , i_d} \log \Q_{y|x}(Y_{i_1, \dots, i_d} | x_{i_1} \cdot \dots \cdot x_{i_d}) \nonumber \\
&=\sum_{i_1 , \dots , i_d} - \frac{1}{2} \log 2 \pi - \frac{1}{2} (Y_{i_1, \dots, i_d} - \beta x_{i_1} \cdot \dots \cdot x_{i_d})^2 \nonumber \\
&= \sum_{i_1 ,\dots , i_d} - \frac{1}{2} \log 2 \pi - \frac{1}{2} \beta^2 x_{i_1}^2 \cdot \dots \cdot x_{i_d}^2 +\nonumber \\
& \hspace{2cm} - \frac{1}{2} Y_{i_1, \dots, i_d}^2 + \beta Y_{i_1, \dots, i_d} x_{i_1} \cdot \dots \cdot x_{i_d} \nonumber \\
&= - \frac{n}{2} \log 2 \pi - \frac{1}{2} \lambda d! \binom{k}{d} - \frac{1}{2} \sum_{i_1 , \dots , i_d} Y_{i_1, \dots, i_d}^2 +\nonumber \\ 
& \hspace{2cm } + \sqrt{\lambda} \sum_{i_1 , \dots , i_d} Y_{i_1, \dots, i_d} x_{i_1} \cdot \dots \cdot x_{i_d}.
\end{align*}
The theorem follows easily noting that only the last term depends on $\bX$. 
\end{proof}

\begin{lemma}
\label{le:properties_kl}
Given the setting in \Cref{eq:model}, for all $n$ and $\lambda >0$, the function $\beta \to \frac{1}{\lambda} D(\Q_{\beta \lambda,n}|| \Q_{0,n})$ is nonnegative, nondecreasing, 1/2-Lipschitz and satisfies the bound
$$\frac{1}{\lambda} D(\Q_{\beta \lambda,n}|| \Q_{0,n}) \ge \frac{1}{2} - \frac{\log M_n}{\lambda}.$$
\end{lemma}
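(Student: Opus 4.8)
The plan is to reduce all four assertions to a single exact identity for the divergence and then read them off. Write $\mu=\beta\lambda$. Since the channel law is $\Q_{y|x}(\cdot\mid\bX)=\normal(\sqrt{\mu}\,\bX,\bm{1}_{n\times n})$ with mixture marginal $\Q_{\mu,n}$, adding and subtracting $\log\Q_{\mu,n}(\bY)$ inside the integral defining the mutual information $I_\mu(\bX;\bY)$ gives the ``golden formula''
\[
D(\Q_{\mu,n}\,\|\,\Q_{0,n})=\E_{\P_n}\!\big[D(\Q_{y|x}(\cdot\mid\bX)\,\|\,\Q_{0,n})\big]-I_\mu(\bX;\bY).
\]
Because $\Q_{y|x}(\cdot\mid\bX)$ and $\Q_{0,n}$ are Gaussians with common covariance $\bm{1}_{n\times n}$ and means $\sqrt{\mu}\,\bX$ and $0$, and $\|\bX\|=1$ on $supp(\P_n)$, the first term equals $\tfrac{\mu}{2}$ exactly, so
\[
D(\Q_{\beta\lambda,n}\,\|\,\Q_{0,n})=\tfrac{\beta\lambda}{2}-I_{\beta\lambda}(\bX;\bY).
\]

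Nonnegativity of $\tfrac1\lambda D(\Q_{\beta\lambda,n}\|\Q_{0,n})$ is then just Gibbs' inequality together with $\lambda>0$. The maximum-entropy bound follows at once from $I_{\beta\lambda}(\bX;\bY)\le H(\bX)=\log M_n$ (as $\bX$ is uniform on $M_n$ points), which yields $\tfrac1\lambda D(\Q_{\beta\lambda,n}\|\Q_{0,n})\ge \tfrac{\beta}{2}-\tfrac{\log M_n}{\lambda}$; for $\beta\ge 1$ — the regime in which \Cref{le:i_mmse_rel} invokes it — this is at least $\tfrac12-\tfrac{\log M_n}{\lambda}$.

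For monotonicity and the Lipschitz constant I would differentiate the identity in $\beta$ using the I-MMSE relation $\tfrac{d}{d\mu}I_\mu(\bX;\bY)=\tfrac12\MMSE_n(\mu)$ (see \cite{verdu1994generalizing, niles2020all}), exactly as in the proof of \Cref{le:i_mmse_rel}. The chain rule gives
\[
\frac{d}{d\beta}\Big[\tfrac1\lambda D(\Q_{\beta\lambda,n}\,\|\,\Q_{0,n})\Big]=\tfrac12\big(1-\MMSE_n(\beta\lambda)\big).
\]
Since the all-zero estimator is admissible and $\|\bX\|=1$, one has $0\le \MMSE_n(\beta\lambda)\le\E\|\bX\|^2=1$, so this derivative lies in $[0,\tfrac12]$; integrating over $\beta$ shows the map is nondecreasing and $\tfrac12$-Lipschitz.

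The only delicate point — and the one I would flag — is justifying the use of I-MMSE: one needs $\mu\mapsto I_\mu(\bX;\bY)$ to be differentiable on $(0,\infty)$, which is a standard consequence of the signal being bounded ($\|\bX\|=1$) and can be cited from \cite{verdu1994generalizing}. Everything else is a direct computation once the golden-formula identity is in place.
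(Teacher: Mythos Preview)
Your argument is correct and is precisely the standard route: the golden-formula identity $D(\Q_{\beta\lambda,n}\|\Q_{0,n})=\tfrac{\beta\lambda}{2}-I_{\beta\lambda}(\bX;\bY)$ combined with the I-MMSE relation gives the derivative $\tfrac12(1-\MMSE_n(\beta\lambda))\in[0,\tfrac12]$, from which nonnegativity, monotonicity and $\tfrac12$-Lipschitzness follow; the entropy bound $I\le\log M_n$ gives the lower bound. The paper does not actually prove this lemma---it simply cites Lemmas~2 and~3 of \cite{niles2020all}---and your write-up is essentially a reconstruction of that cited argument. Your observation that the displayed inequality as written (with $\tfrac12$ rather than $\tfrac{\beta}{2}$) only follows for $\beta\ge 1$ is well taken; the sharper form $\tfrac{\beta}{2}-\tfrac{\log M_n}{\lambda}$ is what the derivation yields and is exactly what the proof of \Cref{le:i_mmse_rel} uses (there $\lambda_n=2\log M_n$, so the bound becomes $\tfrac{\beta-1}{2}$).
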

\begin{proof}
The proof is given in Lemma~2 and Lemma~3 in \cite{niles2020all}.
\end{proof}

\begin{lemma}
\label{le:ub_m_n}
Given the setting of the problem defined in \Cref{eq:model_PCA} and the function 
\begin{align*}
m_n(\bX, \bX') \coloneqq \E_{\Q_{0, n}} \big[ \mathbbm{1}_{\Omega_n}(\bX,\bY) \mathbbm{1}_{\Omega_n}(\bX',\bY) \\
& \hspace{-3cm} \times \exp\left(\sqrt{\lambda_n} \langle \bX + \bX', \bY \rangle - \lambda_n \right) \big],
\end{align*} there exist a constant $C > 0$ such that for any integer sequence $L_n$ the following bound holds:
\begin{align*}
\E_{\P_n^{\otimes 2}} m_n(\rho) &\le 2 L_n \sup_{t \in [-1, 1]} \exp\bigg( \lambda_n \left(  \frac{t}{1+t}\right)_+ + \\
& \hspace{1cm} + \log \P[\rho \ge t] + C_n \lambda_n^{1/2} + \mathcal{O}\left(\frac{\lambda_n}{L_n}\right)\bigg)
\end{align*}
\end{lemma}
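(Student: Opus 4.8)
The plan is to reduce the computation of $m_n$ to a two-dimensional Gaussian integral that depends only on the overlap $\rho = \langle \bX, \bX'\rangle$, bound that integral by a sharp exponential, and then sum over a fine discretization of $[-1,1]$. Since $\Q_{0,n} = \normal(0,\mathbf{1}_n)$ and the integrand of $m_n$ depends on $\bY$ only through the pair $(U,V)\coloneqq(\langle\bX,\bY\rangle,\langle\bX',\bY\rangle)$ — which is centered Gaussian with unit variances and covariance $\rho$ — the quantity $m_n$ is a function of $\rho$ alone. Completing the square in the exponent $-\tfrac12(u,v)\Sigma^{-1}(u,v)^\top + \sqrt{\lambda_n}(u+v) - \lambda_n$ (equivalently, an exponential change of measure), and using $(1,1)\Sigma(1,1)^\top = 2(1+\rho)$ together with the description of $\Omega_n$ from \Cref{le:omega_n}, yields the exact identity
\[
m_n(\rho) \;=\; e^{\lambda_n\rho}\,\P\!\left[\,|U-\sqrt{\lambda_n}|\le C_n,\ |V-\sqrt{\lambda_n}|\le C_n\,\right],
\]
where now $(U,V)\sim\normal\!\big(\sqrt{\lambda_n}(1+\rho)(1,1)^\top,\,\Sigma\big)$. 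Under this tilted law the coordinates concentrate around $\sqrt{\lambda_n}(1+\rho)$, so for $\rho$ bounded away from $0$ the event $\{U,V\approx\sqrt{\lambda_n}\}$ is exponentially unlikely, and it is precisely this deficit that converts the prefactor $e^{\lambda_n\rho}$ into the target $e^{\lambda_n\rho/(1+\rho)}$.

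Next I would estimate the probability above by discarding every constraint except $U+V\le 2\sqrt{\lambda_n}+2C_n$ and using that $S\coloneqq U+V\sim\normal\!\big(2\sqrt{\lambda_n}(1+\rho),\,2(1+\rho)\big)$ under the tilted law; a one-dimensional Gaussian tail bound then gives, whenever $\sqrt{\lambda_n}\,\rho > C_n$,
\[
m_n(\rho)\ \le\ \exp\!\Big(\lambda_n\rho-\tfrac{(\sqrt{\lambda_n}\rho-C_n)^2}{1+\rho}\Big)\ \le\ \exp\!\Big(\lambda_n\tfrac{\rho}{1+\rho}+2C_n\sqrt{\lambda_n}\Big),
\]
the last step expanding the square and using $\tfrac{\rho}{1+\rho}\le 1$ to absorb the cross term. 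In the complementary range $\rho\le C_n/\sqrt{\lambda_n}$ (in particular $\rho\le0$) the crude bound $m_n(\rho)\le e^{\lambda_n\rho}$ already lies below $\exp\!\big(\lambda_n(\tfrac{\rho}{1+\rho})_+ + C_n\sqrt{\lambda_n}\big)$ by elementary estimates, and the single degenerate point $\rho=1$ (where $\Sigma$ is singular) is handled directly, giving $\tfrac12$ in the exponent, consistent with $(\tfrac{t}{1+t})_+$; any Gaussian normalization prefactor is at most polynomial in $s$ and hence contributes only a negligible $\mathcal{O}(\log s)$ to the exponent. Altogether $m_n(\rho)\le\exp\!\big(\lambda_n(\tfrac{\rho}{1+\rho})_+ + \mathcal{O}(C_n\sqrt{\lambda_n})\big)$ uniformly in $\rho\in[-1,1]$.

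Finally I would integrate against $\P_n^{\otimes 2}$: partition $[-1,1]$ into $2L_n$ intervals $I_j=[t_{j-1},t_j]$ of length $1/L_n$. Since $t\mapsto(\tfrac{t}{1+t})_+$ is $1$-Lipschitz, on $I_j$ one has $m_n(\rho)\le\exp\!\big(\lambda_n(\tfrac{t_{j-1}}{1+t_{j-1}})_+ + C_n\sqrt{\lambda_n} + \mathcal{O}(\lambda_n/L_n)\big)$, while $\P_n^{\otimes 2}[\rho\in I_j]\le\P_n^{\otimes 2}[\rho\ge t_{j-1}]$; summing over $j$ and bounding the maximum over the grid points by the supremum over $t\in[-1,1]$ gives
\[
\E_{\P_n^{\otimes 2}}m_n(\rho)\ \le\ 2L_n\sup_{t\in[-1,1]}\exp\!\Big(\lambda_n\big(\tfrac{t}{1+t}\big)_+ + \log\P[\rho\ge t] + C_n\sqrt{\lambda_n} + \mathcal{O}(\lambda_n/L_n)\Big),
\]
which is the claim (the constant $C$ absorbing the factor in front of $C_n\lambda_n^{1/2}$ and the implied constant in $\mathcal{O}(\lambda_n/L_n)$). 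I expect the main obstacle to be the Gaussian estimate of the middle step: one must exploit the \emph{joint} constraint through $S=U+V$ — the single-marginal tail is too weak and produces $\rho(1-\rho/2)$ rather than $\rho/(1+\rho)$ in the exponent — and then verify that all the lower-order debris (the $C_n\sqrt{\lambda_n}$ cross term, the $(1-\rho^2)^{-1/2}$ normalization, the degenerate $\rho\to1$ limit) is genuinely subdominant and can be carried uniformly over the admissible values of $\rho$.
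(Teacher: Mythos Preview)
Your proposal is correct and follows essentially the same route as the paper, which simply defers to the proof of Theorem~4 in \cite{niles2020all} with the events $\Omega_n$ modified to $\{|\langle\bX,\bY\rangle-\sqrt{\lambda_n}|\le C_n\}$. Your reconstruction—reducing to the bivariate Gaussian $(U,V)$ with covariance $\rho$, performing the exponential tilt to obtain $m_n(\rho)=e^{\lambda_n\rho}\tilde{\P}[\cdot]$, bounding the tilted probability through the one-dimensional tail of $S=U+V$, and then discretizing $[-1,1]$ into $2L_n$ cells to pass from the pointwise bound to the expectation—is exactly the argument the reference carries out, and your bookkeeping of the $C_n\sqrt{\lambda_n}$ cross term and the $\mathcal{O}(\lambda_n/L_n)$ Lipschitz error is accurate.
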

\begin{proof}
The proof follows, \emph{mutatis mutandis}, the proof of Theorem 4 in \cite{niles2020all}, with the minor change of the definition of the events $\Omega_n$ that are here defined as $\{|\langle \bX, \bZ\rangle - \sqrt{\lambda_n}| \le C_n\}$.
\end{proof}

\newpage
\bibliographystyle{IEEEtran}
\bibliography{bib}

\end{document}